\tikzset{state/.style={circle, draw, minimum size=0.5cm, initial distance=0.2cm, initial text=},}
\newtheorem{definition}{Definition}
\newtheorem*{informaltheorem}{Theorem 1 (Informal)}
\newtheorem{formaltheorem}{Theorem (Formal)}
\newtheorem{example}{Example}
\newtheorem{lemma}{Lemma}
\newtheorem{corollary}{Corollary}
\newcommand{\algmargin}{\the\ALG@thistlm}
\algnewcommand{\LineComment}[1]{\Statex \(\triangleright\) #1}
\algnewcommand{\parState}[1]{\State%
	\parbox[t]{\dimexpr\linewidth-\algmargin}{\strut #1\strut}}
\algnewcommand{\parRequire}[1]{\Require%
	\parbox[t]{\dimexpr\linewidth-\algmargin}{\strut #1\strut}}
\setlist{nosep,leftmargin=*} 
\setlist{nosep,leftmargin=*}                        
\g@addto@macro\normalsize{%
  \setlength{\abovedisplayskip}{4pt}%
  \setlength{\belowdisplayskip}{6pt}%
  \setlength{\abovedisplayshortskip}{4pt}%
  \setlength{\belowdisplayshortskip}{6pt}%
}
\newif\iflongversion
\newlength{\whilewidth}
\title{\LARGE \bf
Data-driven Supervisory Control under Attacks via Spectral Learning
}
\author{Nathan Smith and Yu Wang$^{1}$
\thanks{$^{1}$Nathan Smith and Yu Wang are with the Department of Mechanical \& Aerospace Engineering, University of Florida, Gainesville, FL, USA. Email: {\tt\small \{nsmith3, yuwang1\}@ufl.edu}}}
\begin{document}

\maketitle
\thispagestyle{empty}
\pagestyle{plain}

\begin{abstract}

The technological advancements facilitating the rapid development of cyber-physical systems (CPS) also render such systems vulnerable to cyber attacks with devastating effects. Supervisory control is a commonly used control method to neutralize attacks on CPS. The supervisor strives to confine the (symbolic) paths of the system to a desired language via sensors and actuators in a closed control loop, even when attackers can manipulate the symbols received by the sensors and actuators. Currently, supervisory control methods face limitations when effectively identifying and mitigating unknown, broad-spectrum attackers. In order to capture the behavior of broad-spectrum attacks on both sensing and actuation channels we model the plant, supervisors, and attackers with finite-state transducers (FSTs). Our general method for addressing unknown attackers involves constructing FST models of the attackers from spectral analysis of their input and output symbol sequences recorded from a history of attack behaviors observed in a supervisory control loop. To construct these FST models, we devise a novel learning method based on the recorded history of attack behaviors. A supervisor is synthesized using such models to neutralize the attacks.

\end{abstract}

\vspace{-12pt}
\section{Introduction}
\label{sec:intro}

Supervisory control is a commonly used control method for cyber-physical systems (CPS) which consists of the combination of digital and physical processes such as smart grids, autonomous vehicles, and precision agriculture \cite{wang_state-based_2018, zhang_perception_2023, diazredondo2020security, jazdi2014cyber}. Traditional continuous control techniques are insufficient for managing (CPS) due to their discrete event-driven nature. Unlike continuous systems, CPS are characterized by discrete events such as keyboard inputs and sensor readings, which trigger state changes \cite{cassandras2008introduction}. In CPS, adherence to specific logical rules expressed using regular expressions is often prioritized over following predefined trajectories, a task more aligned with classical control methodologies. To address this, supervisors function as high-level controllers, facilitating communication between sensors and actuators by exchanging discrete symbols \cite{wonhamSupervisoryControlDiscreteevent2018}. Their role is to enforce abstracted logical rules represented by regular expressions.


CPS are often subject to attacks that can have devastating effects; hence, supervisory control must be studied within the context of attacks. For example, the Stuxnet Worm severely damaged an Iranian power plant in 2010 \cite{langner2011stuxnet}, and a CPS attack led to the shutdown of a Ukrainian power grid in 2017 ~\cite{whitehead2017ukraine, soltan2019react}. In the United States alone CPS attacks are estimated to have caused 6.9 billion dollars worth of damages in 2021 compared to only 32 million in 2010. Similarly, the CPS recorded attack frequency has risen from 1.4 per year to 5.8 per year from 2010 to 2021 \cite{sharif2022literature}. Such CPS attacks are represented with supervisory control by the real-time manipulation of sensor and actuation symbols during communication, e.g. man-in-the-middle and network attacks \cite{alguliyevCyberphysicalSystemsTheir2018, singhStudyCyberAttacks2018}. Traditionally, supervisory control under attacks involves either supervisor synthesis (defensive goal) or attacker synthesis (offensive goal). Supervisor synthesis focuses on \say{creating a supervisor that restricts the plant to desired behaviors despite the presence of specific attacker classes} \cite{you2022livenessenforcing, carvalho2018detection, wang2023attackresilient}. Conversely, attacker synthesis tackles an adversarial problem, aiming to \say{design an attacker that forces the plant into undesired behavior under supervision}  \cite{taiSynthesisOptimalCovert2023, meira-goesSynthesisSensorDeception2020}. Regarding attack synthesis, the design of an attacker based on control loop observations is considered in \cite{tai2023synthesis}, and attackers that can remain stealthy are studied in \cite{taiPrivacypreservingCosynthesisSensor2023, lin2020synthesis, lin_synthesis_2021, meira-goes2021synthesis}.

Currently, supervisory control methods face limitations in effectively identifying and mitigating unknown, broad-spectrum attackers; this motivated us to focus on the more rigorous supervisor synthesis problem rather than the attacker synthesis problem \cite{duo2022survey}. In essence, most supervisory design approaches are highly specialized to address particular classes of attacks or specific control objectives \cite{carvalho2018detection, wakaikiSupervisoryControlDiscreteEvent2019, rashidinejadSupervisoryControlDiscreteEvent2019, su2018supervisor, limaSecurityCyberPhysicalSystems2022, taiPrivacypreservingCosynthesisSensor2023}. To our knowledge, there are few developments outside this work to design supervisors under general attacks with both unknown sensor and actuator attacks. For supervisor design under attacks, most previous work \cite{carvalho2018detection, wakaikiSupervisoryControlDiscreteEvent2019, rashidinejadSupervisoryControlDiscreteEvent2019, su2018supervisor, limaSecurityCyberPhysicalSystems2022} require knowing the attacker model. The work in ~\cite{thapliyal2021learning} discusses cyberattack and defense design for supervisory controllers in CPSs when partial information is available to both the attacker and the supervisor. They also study supervisor design when the attack model is unknown, but their work has key differences from our ideas. In their setup, the plant is modeled as an LTI system and the supervisor as an automaton responsible for producing the control inputs for the plant. Although the supervisors in \cite{su2018supervisor} can handle unknown attacks, they are confined to a class of bounded attacks. In addition, the supervisors in \cite{su2018supervisor} can only handle attacks on the sensor symbols. The design of supervisors that maintain liveness properties under attack are explored in \cite{you2022livenessenforcing}. In \cite{yaoSensorDeceptionAttacks2022, taiPrivacypreservingCosynthesisSensor2023}, the supervisor is designed to uphold plant privacy under attack. The developments made in this work can be applied to a wide variety of supervision goals that encompass both privacy and liveness.

We focus on a model-based approach where finite state transducers (FSTs) are used to model the unknown attacker \cite{mohri_weighted_2004}.  Many common CPS attacks are governed by a finite set of temporal rules, that are well represented by finite state machines such as FSTs \cite{diazredondo2020security, battistelli2009unfalsified}. In order to capture this behavior of broad-spectrum attacks, we model attackers with FSTs.  FSTs are related to but generalize the commonly-used finite-state automata in supervisory control \cite{cassandras2008introduction} and are discussed in Section \ref{sec:prelim}. Specifically, FST can be abstracted as a graph whose transitions between states are driven by an input symbol and produce a different output. This versatility enables FSTs to encapsulate various types of attack, ranging from basic incursions such as symbol insertion, deletion, or substitution to more complex maneuvers such as replay attacks \cite{wang2023attackresilient}. Moreover, FSTs are adept at modeling dynamic assaults like the Stuxnet worm \cite{langner2011stuxnet}, and those characterized by man-in-the-middle or network-based strategies. The ability of FSTs to represent bi-directional communication also enables them to model the plant, attackers and supervisor in a closed control loop. This versatile structure enables the application of the same supervisor synthesis strategy across a wide range of FSTs \cite{wang2023attackresilient}. Expanding on the supervisor design frameworks developed in \cite{wang2019supervisory,wang2019attackresilient,wang2023attackresilient}, we enhance their adaptability to scenarios where attacker FST models are unknown.

We investigate the following question: how can a set of sample attacks be used to design a resilient supervisor? We say a supervisor is resilient if it is able to enforce the desired behavior on the plant regardless of the attack mode produced by an attacker FSTs. We consider attacks on both the sensor and actuation communication channels of the supervisory control loop shown in \ref{fig:control_loop_record}. CPS go through extensive testing phases, allowing for the collection of sample attacks. Attacks can occur during software-in-the-loop testing with a real adversary \cite{raghupatruni_empirical_2019, he_system_2019} or simulated during penetration testing \cite{li_deep_2022}. Software in-the-loop testing has shown significant utility for the automotive production industry. Penetration testing exposes the CPS to a diverse set of cyber attacks to find what may be the most hazardous in a controlled environment. This allows the attack behavior to be learned in a low-risk environment preventing harmful effects from the attacker. These samples could then be used to learn a FSM representation of the attacker in an offline environment.


\begin{figure}[b]
\centering
\includegraphics[width=0.9\linewidth]{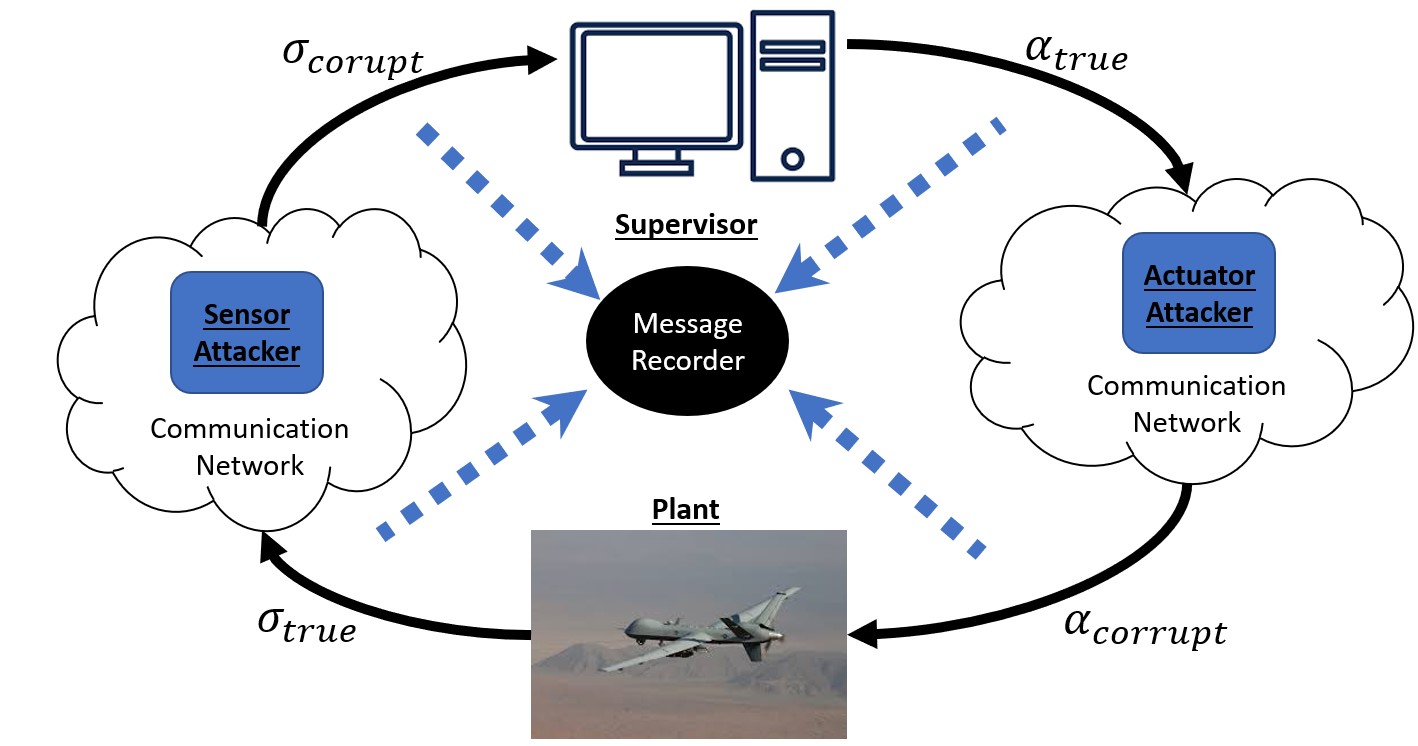}
\caption{\small Supervisory control of cyber-physical systems under attacks. Attackers intercept messages between Plant and Supervisor. Attacks recorded during testing phases.} 
\label{fig:control_loop_record}
\end{figure}


The novelty of this manuscript is a data-driven method that leverages spectral learning to ultimately provide a supervisor resilient to unknown sensor and actuator attackers. Our general technical method for handling unknown attackers is based on constructing FST models of the attackers by performing spectral analysis on collected sequences of their input and output symbols. Traditionally, the learning of a FSM from samples is done through variations of the $L^*$ algorithm by relying on queries\cite{angluinLearningRegularSets1987}. However, we find it unreasonable to assume an attacker would cooperate with such queries. Instead, our method leverages spectral learning, which does not depend on queries and has demonstrated effectiveness in learning matrix representations of FSMs directly from data. \cite{hsu_spectral_2012, balle_spectral_2014}. Our approach begins by organizing the collected sample attacks into binary matrices that represent the attacker behavior in Section \ref{sec:data_requirements}. Since we consider attackers modeled with FSTs the attack behavior patterns are described by a finite graph structure. This finite graph structure can be represented with 2D arrays known as Hankel matrices, where each linearly independent row is representative of a state in the attacker FST \cite{balle_learning_2015}. Spectral learning is then applied to such matrices to extract their linearly independent rows and eventually obtain the FST model in Section \ref{sec:spec_method}. This model is then used to design a supervisor that is resilient to attacks captured by the model. In addition to the attacker model the existence of such a supervisor is dependent on the dynamics and desired behavior of the plant.

\vspace{-12pt}
\section{Preliminaries}
\label{sec:prelim}

We introduce finite-state transducers (FSTs), which are used to model broad-spectrum attackers. They are also used to model the plant and supervisor to provide a more general supervisory control loop than the classic automata-based supervisory control loop. FST's are carefully defined in \ref{subsec:prelim_FST} and their use within the supervisory control loop is presented in \ref{subsec:prelim_SCL}.
\vspace{-12pt}
\subsection{Finite state transducers}
\label{subsec:prelim_FST}
\begin{definition}
\label{def:fst}
A finite-state transducer (FST) is a tuple $(\mathbf{S}, \mathbf{s_0}, \mathbf{I}, \mathbf{O}, \mathbf{Trans}, \allowbreak \mathbf{S_f})$ where
\begin{itemize}
  \item $\mathbf{S}$ is a finite set of states;
  \item $s_0 \in \mathbf{S}$ is the initial state;
  \item $\mathbf{I}$ is a finite set of non-empty input symbols that contains the empty symbol $\varepsilon$;
  \item $\mathbf{O}$ is a finite set of non-empty output symbols that contains $\varepsilon$;
  \item $\mathbf{Trans} \subseteq \mathbf{S} \times \mathbf{I} \times\mathbf{O} \times \mathbf{S}$ is a transition relation;
  \item $(s, \epsilon, \epsilon, s) \in \mathbf{Trans}$ for all $s \in \mathbf{S}$;
  \item $\mathbf{S_f} \subseteq \mathbf{S}$ is a set of final states.
\end{itemize}
\end{definition}  

We call $\mathbf{I} \times \mathbf{O}$ the \emph{alphabet} of the FST. The Kleene Star of an alphabet $\mathbf{I} \times \mathbf{O}$ is defined as $(\mathbf{I} \times \mathbf{O})^* = \{\chi_1 \ldots \chi_n \mid \chi_1, \ldots, \chi_n \in \mathbf{I} \times \mathbf{O}, n \in \mathbb{N} \}$; by convention, $(\varepsilon, \varepsilon) \in (\mathbf{I} \times \mathbf{O})^*$. A \emph{path} of an FST is defined as a sequence of transitions $\{(s_{k}, i_k, o_k, s_{k+1} )\}^n_{k = 1}$ where $(s_{k}, i_k, o_k, s_{k+1}) \in \mathbf{Trans}$ for all $1 \leq k \leq n$ and $s_{1} = \mathbf{s_0}$. 
An FST path is \emph{accepted} if $s_{{n+1}} \in \mathbf{S_f}$
This accepted path defines an accepted \emph{word} of the FST with the concatenation of input/output pairs, $\chi_1 \chi_2 ... \chi_n$ where $\chi_k = (i_k, o_k)$ for $k = 1, \ldots, n$. The language of an FST is the set of all words accepted by the FST, denoted by $L( \mathcal{A} )$. The FST language describes what combination of input and outputs are accepted by the FST and is used to differentiate one FST from another. Two FSTs, $\mathcal{A}$ and $\mathcal{A'}$, are equivalent if $L(\mathcal{A}) = L(\mathcal{A'})$.  The minimum state equivalent of an FST $\mathcal{A}$ is an FST $\mathcal{A'}$ equivalent to $\mathcal{A}$ with the fewest possible states. Unless explicitly stated otherwise we will only consider the minimum state equivalent of FSTs.

We also call $i_1 i_2 \dots i_n$ an input word and $o_1 o_2 \dots o_n$ an output word. The set of allowed input words is the \emph{input language} and denoted by $L_{in}(\mathcal{A})$. Similarly, the set of allowed output words is the \emph{output language} and denoted as $L_{out}(\mathcal{A})$. For an accepted input word, the FST nondeterministically generates a path and gives an output word accordingly, as shown in Example~\ref{ex:fst input-output}. 

\begin{example} \label{ex:fst input-output}
In Figure \ref{fig:observable_fst}, the FST receives $i_1 i_2$ as an input word and can output either $o_1 o_3$ or $o_1 o_2$  as an output word. State 0 is the initial state and state 1 is the final state.
\begin{figure}[h]
  \centering
  \includegraphics[width=0.3\linewidth]{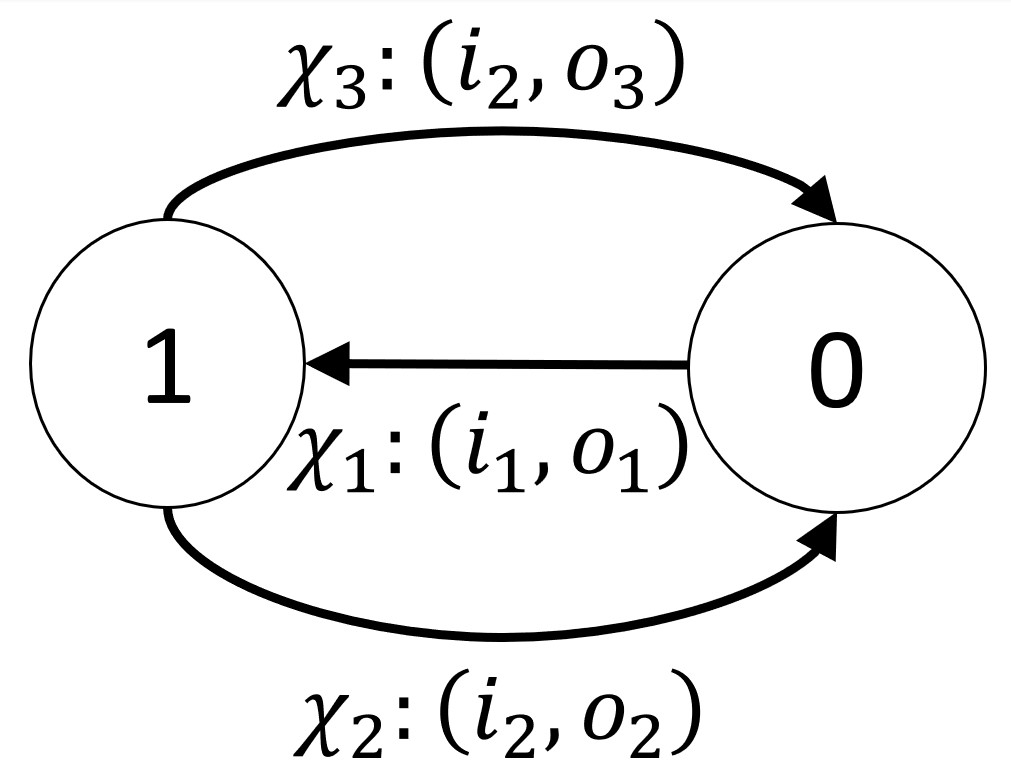}
  \caption{\small An Example FST.}
  \label{fig:observable_fst} 
\end{figure}
\end{example}
\vspace{-16pt}
\subsection{Supervisory control with FSTs}
\label{subsec:prelim_SCL}
Since we use FSTs to capture broad-spectrum attacks, we adopt the FST-based supervisory control framework proposed in~\cite{wang2023attackresilient}. This supervisory control framework can handle any attack defined by a regular language since every regular language over a two-dimensional alphabet is realized by some FST over the same alphabet \cite{holcombe2004algebraic}. As shown in Figure~\ref{fig:composite_fst_control}, the supervisory control loop is described as follows:
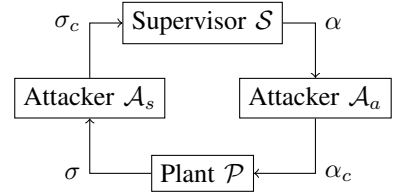
\begin{figure}[!t]
\vspace{8 pt}
\centering
\begin{tikzpicture}
\node (center) {};
\node [draw, right of=center, node distance=1.5cm]  (ai) {Attacker $\mathcal{A}_a$};
\node [draw, below of=center, node distance=1cm]  (p) {Plant $\mathcal{P}$};
\node [draw, left of=center, node distance=1.5cm] (ao) {Attacker $\mathcal{A}_s$};
\node [draw, above of=center, node distance=1cm] (s) {Supervisor $\mathcal{S}$};

\draw [->] (ai) |- node [below,right] {$\alpha_{c}$} (p);
\draw [->] (p) -| node [below,left] {$\sigma$} (ao);
\draw [->] (ao) |- node [above,left] {$\sigma_{c} $} (s);
\draw [->] (s) -| node [above,right] {$\alpha$}  (ai);
\vspace{-50 pt}
\end{tikzpicture}

\caption{Supervisory control loop under sensor and actuator attacks. Sensor attacker manipulates messages from plant to supervisor. Actuator attacker manipulates messages from supervisor to plant}
\label{fig:composite_fst_control}
\end{figure}
\begin{enumerate}
\item The supervisor (modeled by an FST) chooses an output symbol $\alpha$, which will be sent to the plant, based on an eligible transition that has this symbol as its output.  
\item The actuator attacker $\mathcal{A}_a$ receives $\alpha$ and makes a transition whose input symbol matches $\alpha$. There may be many such transitions and the attacker can choose any one non-deterministically. This transition will generate another symbol $\alpha_c$ which will be sent to the plant.
\item The plant receives $\alpha_c$ and non-deterministically makes a transition whose input symbol matches it. This transition will generate an output symbol $\sigma$.
\item The sensor attacker $\mathcal{A}_s$ receives $\sigma$ and non-deterministically makes a transition whose input symbol matches it. This transition will generate an output symbol $\sigma_c$, which will be sent to the supervisor.
\item Upon receiving $\sigma_c$, the supervisor completes this iteration by executing the transition labeled by $(\sigma_c, \alpha)$. Otherwise, when there is no such transition, the supervisor will stop the control loop and raise an alarm. This completes a singular time step. 
\end{enumerate}
Due to the clocked nature of the FST control loop, the empty symbol $\varepsilon$ represents the empty message and is recorded as such. It has been shown in~\cite{wang2023attackresilient} that the language of the supervised plant (i.e., the set of plant's possible words in the control loop) is given by 
$L(\mathcal{P} | \mathcal{S}, \mathcal{A}_s, \mathcal{A}_a) = L( (\mathcal{A}_s \circ \mathcal{S} \circ \mathcal{A}_a )^{-1}) \cap L(\mathcal{P})$, 
where $\circ$ stands for the standard FST composition. We say a supervisor $\mathcal{S}$ is resilient if $L(\mathcal{P} | \mathcal{S}, \mathcal{A}_s, \mathcal{A}_a) = \mathcal{K}$. Example \ref{ex:fst_sup_control_example_w_attack} provides an example supervisory control loop with an actuator attack present and will be used throughout a running example.

The desired language $\mathcal{K}$ contains all admissible paths rather than just specific paths. Some admissible paths may meet certain objective criterion better than other admissible paths.. Suppose it is desired for the plant to follow a specific path defined by $\chi_1 \in (\mathbf{I} \times \mathbf{O} )^*$, but the attackers have instead forced the plant to follow $\chi_2$. If both $\chi_1$, $\chi_2 \in \mathcal{K}$ the supervisor will not shut down the plant. However, if $\chi_2$ does not fall within $\mathcal{K}$, then the supervisor will terminate the process.  

We desire the plant FST to be capable of stopping at any instant regardless of current state. Consequently, all states for each FST in the control loop are assumed final. This ensures each prefix of an allowed word is also allowed within the control loop. This concept is formalized with the prefix closure of a language. The prefix closure $\overline{L}$ of a language $L$ is defined by including all prefixes of all $\chi \in L$. We say $L$ is prefix closed if $L = \overline{L}$. Each FST in the supervisory control loop is assumed to have a prefix closed language

\begin{example}
\label{ex:fst_sup_control_example_w_attack}
Figure~\ref{fig:ex_fst_attack} illustrates the supervisory control of an FST plant with an actuator attacker. Pairs of input-output symbols for the attacker, plant and superivisor are labeled with $\chi, \beta,$ and $\tau$ respectively. The plant is modeled with the bottom FST. It has one state $0$ and two transitions $\beta_1$ and $\beta_2$ with input/output symbols $(\alpha_1, \sigma_2)$ and $(\alpha_2, \sigma_2)$, respectively. The goal is to restrict the plant to message sequences given by $ \mathcal{K} = \{\beta_1 \beta_2\}^* = \{ (\alpha_1, \sigma_2)(\alpha_2, \sigma_2) \}^*$. The attacker is on the right; it replaces $\alpha_3$ with $\alpha_1$ and vice-versa while in state $0$, and $\alpha_3$ with $\alpha_2$ while in state 1. Notice that the actuator attacker has actuation symbols for both input and output symbols. In order for the supervisor to successfully restrict the plant to $\mathcal{K}$, the supervisor must now send the plant $(\alpha_3 \alpha_1)^*$ ($\alpha_3$ followed by $\alpha_1$) to account for the actuator attacker.
\begin{figure}[h]
\vspace{9 pt}
\centering
\includegraphics[width=0.6\linewidth]{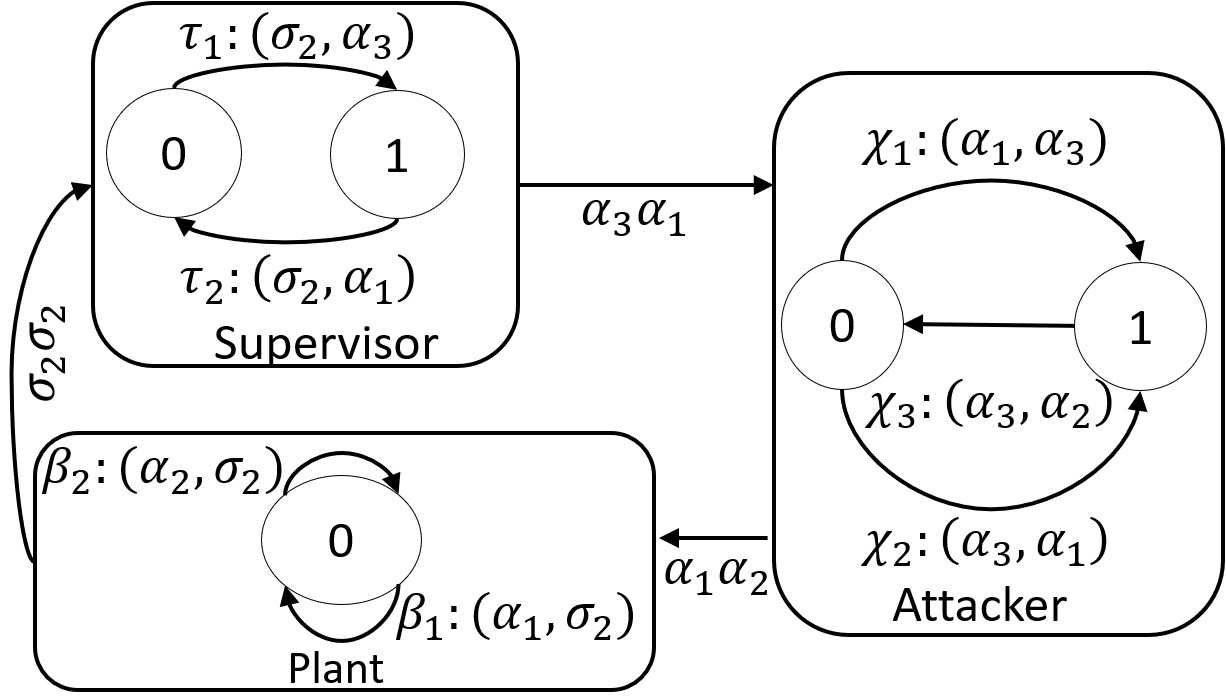}
\caption{\small Example of supervisory control loop with an actuator attacker $\mathcal{A}_a$. \label{fig:ex_fst_attack} }
\end{figure}

\end{example}

\vspace{-20pt}
\section{Problem Formulation}
\label{sec:problem_form}
Our problem formulation studies the resilient supervisor synthesis problem  in the FST framework when the attack models $\mathcal{A}_a$ and $\mathcal{A}_s$ are \emph{unknown}. We assume that both the sensing and actuation channels are subject to attacks. Additionally, we assume that there exist FST models $\mathcal{A}_a$ and $\mathcal{A}_s$ that accurately model these attackers. Our goal is to use recorded attacks from $\mathcal{A}_a$ and $\mathcal{A}_s$ to design a resilient supervisor. 

Following the developments in \cite{wang2023attackresilient}, if adequate models of the unknown attackers can be obtained, then these models can be used to construct a resilient supervisor. This is done by composing the desired language of the plant with the inverse of the attacker FST models to reverse their effects with \begin{equation}
\label{eq:sup_candidate}
    \mathcal{S} = \mathcal{A}_s^{-1} \circ \mathcal{M_K}^{-1} \circ  \mathcal{A}_a^{-1}
\end{equation} 
where $\mathcal{K}  = L( \mathcal{M_K})$.  Equation \eqref{eq:sup_candidate} provides a resilient supervisor from the attacker models when such models are available. If after computing $\mathcal{S}$ with \eqref{eq:sup_candidate} we find that 
\begin{equation}
\label{eq:sup_req}
L(\mathcal{P} | \mathcal{S}, \mathcal{A}_s, \mathcal{A}_a)= \mathcal{K}
\end{equation} then $\mathcal{S}$ is resilient. However, if we find that $L(\mathcal{P} | \mathcal{S}, \mathcal{A}_s, \mathcal{A}_a) \neq \mathcal{K}$ then no such supervisor exists. 

We know a resilient supervisor can be built once sufficiently accurate FST models of the attackers are available, so the core problem is learning these models from recorded attack histories.  More specifically, we aim to construct FSTs $\mathcal{A}_a'$ and $\mathcal{A}_s'$ from $\mathcal{D}_a \subseteq L( \mathcal{A}_a )$ and $\mathcal{D}_s \subseteq L( \mathcal{A}_s )$ such that $L( \mathcal{A}_a') = L( \mathcal{A}_a)$ and $L( \mathcal{A}_s') = L( \mathcal{A}_s)$. This process of learning the FST model $\mathcal{A}$ from sample recordings $\mathcal{D} \subseteq L( \mathcal{A} )$ does not depend on whether the samples are collected from the sensor channel or actuation channel. Consequently, our problem reduces to finding a model $\mathcal{A'}$ from $\mathcal{D} \subseteq L( \mathcal{A} )$ such that $L( \mathcal{A}') = L( \mathcal{A})$. 

\vspace{-8pt}
\section{Data Requirements}
\label{sec:data_requirements}

We will now present a sufficiency condition on $\mathcal{D}$ to ensure it is representative of all possible attack behaviors from $\mathcal{A}$. If $\mathcal{D}$ meets this sufficiency condition, then the FST model of the attacker can be learned. Such a condition involves curating $n+1$ matrices from $\mathcal{D}$ where $n = | \mathbf{I} \times \mathbf{O}| $. This collection of matrices is defined as $H_z$.  The rows and columns in these matrices are indexed by attack sequences recorded in $\mathcal{D} \subset L(\mathcal{A} )$.   


The sufficiency condition involves the rank of such matrices since each linearly independent row within a matrix in $H_z$ is representative of a state in the unknown attacker FST $\mathcal{A}$. More precisely, each linearly independent row is representative of a state within the minimum size FST equivalent to $\mathcal{A}$. If $H_z$ does not contain enough linearly independent rows then $H_z$ is not representative of all possible attack behaviors. This is because the matrices in $H_z$ derived from attack sequences $\mathcal{D}$ can be generated from an FST with states fewer than the minimum state FST representation of $\mathcal{A}$. 




The first matrix that we will consider in $H_z$ is known as $H_\Theta$ and must meet a rank condition. This matrix is defined by two different sets of words and describes what words have been collected in $\mathcal{D}$. This is a binary matrix where 1's represent words that have been collected in $\mathcal{D}$ and 0's for words that have not been collected. In order for the sufficiency condition to be met $H_\Theta$ must not contain any errors e.g. fale positives or false negatives. Otherwise, $H_\Theta$ will represent behaviors that do not match the attacker's true behavior. As previously mentioned, $H_\Theta$ must also meet a rank condition to gaurantee all the behaviors are represented from the unknown attacker. Specifically, $H_\Theta$ must have a rank equal to the state size of $\mathcal{A}$. The rank of $H_\Theta$ is equal to the state size of an FST that would produce the patterns in $H_\Theta$. These conditions and the explicit definition for $H_\Theta$ are given below in \ref{def:basis}. 
\begin{definition}
\label{def:basis}
Two subsets, $\Psi, \Gamma \subset  (\mathbf{I} \times \mathbf{O} )^*$, are needed to define $H_\Theta$; the pair, $\Theta=(\Psi, \Gamma )$, is known as a mask.  Any mask $\Theta = ( \Psi , \Gamma )$ defines $H_{\Theta} \in \mathbb{R}^{|\Psi| \times |\Gamma|}$ with $H_{\Theta}(\psi, \gamma ) = 1$ if and only if $\psi\gamma \in  \mathcal{D}$, and $H_{\Theta}(\psi, \gamma ) = 0$ otherwise for all $(\psi, \gamma ) \in \Psi \times \Gamma$. A mask that meets the conditions to ensure that $\mathcal{D}$ is an adequate representation of the unknown attacker FST is referred to as a basis. Given an FST language $L( \mathcal{A} )$ for some FST $\mathcal{A}$, we call $(\Psi , \Gamma)  = \Theta$ a basis for $L( \mathcal{A} )$ if $ \mathrm{rank}( H_\Theta ) = |\mathbf{S}| $, and $ \varepsilon \in \Psi \cap \Gamma$ where $\mathbf{S}$ is the set of states for $\mathcal{A}$. 
\end{definition}
\vspace{-6pt}
Observe that $\mathcal{D}$ and $H_\Theta$ can satisfy this rank condition from a relatively small sample size allowing us to eventually learn the FST model from a small sample of possible attack behaviors. This allows our method to account for samples not directly observed in $\mathcal{D}$. An example basis will now be given in the context of the running example from Figure \ref{fig:ex_fst_attack}. This will illustrate how one curates $H_\Theta$ from the sample attack set $\mathcal{D}$. 
\vspace{-3pt}
\begin{example}
\label{ex:hankle_basis}
Continuing our running example, consider the following set of recorded words from the actuator attacker FST $\mathcal{A}_a$ depicted in Figure \ref{fig:ex_fst_attack}. In the context of our problem formulation, we only have access to $\mathcal{D}$ below and not the model of the attacker itself.
\begin{equation*} 
\mathcal{D} = 
    \begin{Bmatrix}
    \chi_1, &
    \chi_1 \chi_3,  &
    \chi_1 \chi_3 \chi_1,  &
    \chi_2,  \\ 
    \chi_2 \chi_3,  &
    \chi_2 \chi_3 \chi_2,  &
    \chi_1 \chi_3 \chi_2,  &
    \chi_2 \chi_3 \chi_1 
    \end{Bmatrix}.
\end{equation*}
From these samples, we construct a basis $(\Psi , \Gamma  = \Theta)$ where $\Psi = \{ \varepsilon, \chi_1, \chi_2 \}$ and $\Gamma = \{ \varepsilon, \chi_1, \chi_2, \chi_3 \}$. The matrix subblock $H_\Theta$ is given below. Observe that from only eight sample attacks $\mathcal{D}$ is of sufficient size to extrapolate all possible attack behaviors from the unknown FST attacker.
\[
  H_\Theta = \kbordermatrix{
    & \varepsilon & \chi_1 & \chi_2 & \chi_3 \\
    \varepsilon & 1 & 1 & 1 & 0 \\
    \chi_1 & 1 & 0 & 0 & 1 \\
    \chi_2 & 1 & 0 & 0 & 1 \\
  }
\]

\end{example}
For each $\chi \in \mathbf{I} \times \mathbf{O}$ we must curate a matrix from $\mathcal{D}$ known as $H_\chi$ that provides information regarding possible transitions labeled by $\chi$. While $H_\Theta$ encodes information about the word $\psi \gamma$ with $(\psi ,\gamma) \in \Psi \times \Gamma$.  The Hankel matrix $H_\chi$ encodes information about the word $\psi \chi \gamma$ for each letter $\chi$ in $\mathbf{I} \times \mathbf{O}$. The collection of each $H_
\chi$ and $H_\Theta$ will be denoted as $H_z$. If  $H_z$ meets sufficiency conditions for us to learn $\mathcal{A}$, we say $H_z$ \emph{represents} the unknown attacker FST $\mathcal{A}$. Such conditions require $H_\Theta$ to be of sufficient rank and each $H_\chi$ to contains no errors. A formal definition for $H_z$ and such conditions are given below. 

\begin{definition}
    \label{def:H_z}
Let $\mathcal{A}$ be an FST with alphabet $\mathbf{I} \times \mathbf{O}$ and state set $\mathbf{S}$. Let $n = | \mathbf{I} \times \mathbf{O} |$. Suppose $\Theta = (\Psi,\Gamma)$ is a mask and $\mathcal{D} \subset L( \mathcal{A} )$. Then for any $\chi \in \mathbf{I} \times \mathbf{O}$ we can define a matrix $H_{\chi} \in \mathbb{R}^{|\Psi| \times |\Gamma|}$ with $H_{\chi}(\psi, \gamma) = 1 $ if and only if  $\psi \chi \gamma \in  \mathcal{D}$ and $H_{\chi}(\psi, \gamma) = 0 $ otherwise where $(\psi, \gamma ) \in (\Psi \times \Gamma)$.  The collection of all such matrices is denoted as $H_z = 
\begin{Bmatrix} 
H_{\Theta}, & 
H_{\chi_1}, & 
H_{\chi_2}, & 
\dots, & 
H_{\chi_n}
\end{Bmatrix}$
where $n = | \mathbf{I} \times \mathbf{O} |$. We say $H_z$ \emph{represents} $\mathcal{A}$ if $H_\Theta$ is a basis for $L(\mathcal{A})$, and each $H_{\chi_i}$ contains no false negatives or false positives.
\end{definition}
Intuitively, it helps to think of $H_z$ as a set of matrices that capture all the attack patterns present in $\mathcal{D}$. The rank of $H_\Theta$ represents the state size of an FST responsible for producing the attacks in $\mathcal{D}$. Thus, if the matrix $H_\Theta$ does not have sufficient rank then $\mathcal{D}$ represents an FST that is "smaller" than the true unknown attacker FST. The error condition on each $H_\chi$ requires that one does not record a 0 for a word that belongs in $L(\mathcal{A} )$. This condition guarantees that $H_z$ can generalize all attacks from $\mathcal{A}$. 
Next, we will continue the running example and show how each $H_\chi$ is derived from $\mathcal{D}$ and $\Theta$.
\begin{example}
\label{ex:hankle_subblock}
The matrices derived from the basis used in Example \ref{ex:hankle_basis} are provided below. All matrices are contained in the set $H_z$. The reader is reminded that each entry in the matrices is a 1 if the word is recorded in the sample set and 0 otherwise.
\[
 H_{\chi_1} = H_{\chi_2} = \kbordermatrix{
    & \varepsilon & \chi_1 & \chi_2 & \chi_3 \\
    \varepsilon & 1 & 0 & 0 & 1 \\
    \chi_1 & 0 & 0 & 0 & 0 \\
    \chi_2 & 0 & 0 & 0 & 0 \\
  }
\]
\[
 H_{\chi_3} =  \kbordermatrix{
    & \varepsilon & \chi_1 & \chi_2 & \chi_3 \\
    \varepsilon & 0 & 0 & 0 & 0 \\
    \chi_1 & 1 & 1 & 1 & 0 \\
    \chi_2 & 1 & 1 & 1 & 0 \\
  }
\]
\begin{equation*}
    H_z = \{H_\Theta,  H_{\chi_1}, H_{\chi_2}, H_{\chi_3} \} 
\end{equation*}
\end{example}



In practice, one will choose $\Psi$ and $\Gamma$ as a mask that maximizes the rank of $H_{\Theta}$; this will create the most representative model of the true attacker. Furthermore, if for any $\chi \in \mathbf{I} \times \mathbf{O}$, $H_{\chi}$ is not in the row space of $H_{\Theta}$ then $\Theta$ is not a basis for $L(\mathcal{A})$ \cite{carlyle_realizations_1971}. In this case, increase the size of $\Theta$ or collect more samples. The enforcement of this property is equivalent to the closed property of an observation table described in $L^*$ algorithm \cite{angluinLearningRegularSets1987}. 

To conclude this section, we introduce an informal theorem linking Hankel matrices to an FST model. Specifically, the theorem states that if the unknown attacker $\mathcal{A}$ is sampled adequately so that its behavior is captured by the collection $H_z$ of Hankel matrices, then matrix algebra can be used to construct an equivalent finite state transducer $\mathcal{A}'$.

\begin{informaltheorem}
\label{theorem:informal}
Assume $\mathcal{A}$ has been sufficiently sampled to obtain a collection of matrices $H_z = \{ H_\Theta, H_{\chi_1}, ..., H_{\chi_n} \}$ that represent $\mathcal{A}$. Then, matrix algebra on $H_z$ provides an FST $\mathcal{A}'$ that is equivalent to $\mathcal{A}$.
\end{informaltheorem}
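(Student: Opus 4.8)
The plan is to recognize this statement as the correctness guarantee for spectral (Hankel-factorization) learning of weighted automata, specialized to the case where the weights are the $\{0,1\}$-valued indicator of the regular language $L(\mathcal{A})$ over the product alphabet $\mathbf{I}\times\mathbf{O}$. First I would fix the characteristic function $f:(\mathbf{I}\times\mathbf{O})^*\to\{0,1\}$ with $f(w)=1$ iff $w\in L(\mathcal{A})$, and introduce its full (bi-infinite) Hankel matrix $H_f$ indexed by all prefixes and suffixes, with $H_f(u,v)=f(uv)$. Under the no-error hypothesis, the sampled block $H_\Theta$ is exactly the restriction of $H_f$ to $\Psi\times\Gamma$, and each $H_\chi$ is the restriction of the shifted matrix $(u,v)\mapsto f(u\chi v)$ to the same index sets. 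The backbone result I would invoke is the Fliess / Carlyle--Paz theorem cited as \cite{carlyle_realizations_1971}: the rank of $H_f$ equals the number of states of the minimal automaton realizing $f$, which here is $|\mathbf{S}|$.

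Next I would use the two defining properties of ``$H_z$ represents $\mathcal{A}$'' to turn the finite sampled blocks into exact algebraic data. Because $\Theta$ is a basis, $\mathrm{rank}(H_\Theta)=|\mathbf{S}|=\mathrm{rank}(H_f)$, so the rows of $H_\Theta$ already span the entire row space of $H_f$ and its columns span the entire column space; in this sense the finite block is \emph{complete}, and no state of $\mathcal{A}$ is invisible to it. Because each $H_\chi$ contains no false positives or negatives, its entries coincide with the true shifted Hankel values on $\Psi\times\Gamma$. I would then fix any rank factorization $H_\Theta=PS$ with $P\in\mathbb{R}^{|\Psi|\times|\mathbf{S}|}$ and $S\in\mathbb{R}^{|\mathbf{S}|\times|\Gamma|}$ (e.g.\ via a truncated SVD), and define the candidate transducer $\mathcal{A}'$ through the spectral formulas
\begin{equation*}
A_\chi = P^{+}H_\chi S^{+},\qquad
\alpha_0^{\top} = h_\varepsilon^{\top} S^{+},\qquad
\alpha_\infty = P^{+}\tilde h_\varepsilon
\end{equation*}
where $h_\varepsilon$ is the $\varepsilon$-row and $\tilde h_\varepsilon$ the $\varepsilon$-column of $H_\Theta$, and $\alpha_0,\alpha_\infty$ are the initial and final weight vectors; these boundary vectors are well defined precisely because a basis requires $\varepsilon\in\Psi\cap\Gamma$.

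The heart of the proof is the identity $H_\chi = P A_\chi S$ for every $\chi$, from which correctness follows by induction on word length: expanding $f(\chi_1\cdots\chi_k)=\alpha_0^{\top} A_{\chi_1}\cdots A_{\chi_k}\alpha_\infty$ and collapsing adjacent $P^{+}P$ and $SS^{+}$ factors reproduces the true Hankel entry at each step, with base case $f(\varepsilon)=H_\Theta(\varepsilon,\varepsilon)$. I would therefore prove $H_\chi=PA_\chi S$ by showing the column space of $H_\chi$ lies in that of $P$ and the row space of $H_\chi$ lies in that of $S$, so that $PP^{+}H_\chi S^{+}S=H_\chi$. This containment is exactly the closedness condition flagged before the theorem (every $H_\chi$ lying in the row space of $H_\Theta$), and it is where completeness of the basis is indispensable: without $\mathrm{rank}(H_\Theta)=|\mathbf{S}|$ the projectors $PP^{+}$ and $S^{+}S$ would discard directions and the reconstruction would match $f$ only on a strictly smaller automaton. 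This algebraic containment step, together with the fact that the complete finite block faithfully mirrors the infinite matrix $H_f$, is the step I expect to be the \emph{main obstacle}.

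Finally I would close the loop by passing from the real-weighted automaton $(\alpha_0,\{A_\chi\},\alpha_\infty)$ back to a genuine FST $\mathcal{A}'$ over $\mathbf{I}\times\mathbf{O}$. Since the computed function is $\{0,1\}$-valued and agrees with $f$ on all of $(\mathbf{I}\times\mathbf{O})^*$, the two automata accept the same words, so $L(\mathcal{A}')=L(\mathcal{A})$ and $\mathcal{A}'$ is equivalent to $\mathcal{A}$ in the sense defined in Section~\ref{subsec:prelim_FST}; reading states off the basis and transitions off the supports of the operators $A_\chi$ yields the transition relation $\mathbf{Trans}$, with the $\varepsilon$-self-loops supplied automatically by the Kleene structure. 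A secondary subtlety worth isolating is that the spectral construction a priori produces weighted rather than boolean transitions, so I would note that minimality (guaranteed by the rank being exactly $|\mathbf{S}|$) makes this realization canonical and lets us recover the nondeterministic FST without spurious states.
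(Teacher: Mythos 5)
Your treatment of the spectral half of the argument is sound and in fact more self-contained than the paper's: the identity $H_\chi = P A_\chi S$, the row/column-space containment argument, and the induction on word length are exactly what the paper outsources to Lemma 2 of \cite{balle_spectral_2014} inside its Lemma \ref{lemma:factorization2trans_tuple}, and your appeal to the Carlyle--Paz rank theorem to justify completeness of the basis matches the role of the rank condition in Definition \ref{def:basis}. Up to the point where you hold a minimal real-weighted linear representation $(\alpha_0,\{A_\chi\},\alpha_\infty)$ computing the indicator function of $L(\mathcal{A})$, your proposal and the paper agree.

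The gap is in your final step, which is precisely where the paper's actual technical contribution lives. You propose to read the transition relation off the \emph{supports} of the operators $A_\chi$ and assert that minimality makes the realization canonical. It does not: a minimal linear representation is unique only up to similarity $A_\chi \mapsto M^{-1}A_\chi M$, and the representation produced by an SVD-based rank factorization is generically dense with arbitrary real (even negative) entries --- in the paper's running example the required change-of-basis matrix has entries such as $-1.29$ and $1.15$, and before applying it the matrices $T_\chi$ are not boolean. The support of such a matrix bears no relation to the transition relation of any FST accepting $L(\mathcal{A})$, so ``transitions off the supports'' would introduce spurious edges. What is missing is the construction of the specific similarity that renders the representation boolean: the paper does this in Lemmas \ref{lemma:equiv_binary} and \ref{lemma:natural_fact} by defining a \emph{natural} decomposition (each row of $P$ a standard basis vector, $S$ binary), proving that the transition tuple is natural if and only if the decomposition is, and exhibiting the required transformation $B$ as the stack of the $r$ linearly independent rows of $P$ as in Equation \eqref{eq:b_construct}. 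Only after this transformation can Lemma \ref{lemma:fst_def_equiv} read states, transitions, and initial/final states directly off the tuple. Your proof would be complete if you added this change-of-basis step, or an equivalent argument producing a computable similarity under which every $A_\chi$ has rows in $\{e_i\}\cup\{0\}$.
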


\vspace{-12pt}
\section{Spectral learning for FSTs}
\label{sec:spec_method}



This section presents a novel method for deriving attack models as finite state transducers (FSTs) from the set of Hankel matrices constructed from attack sample data in Section \ref{sec:data_requirements}. To bridge the gap between these Hankel matrices and the FST's graph structure, we begin by introducing the concept of a transition tuple (a matrix representation of an FST) in Section \ref{subsec:trans_tuples}. Next, we show how applying a matrix spectral analysis method based on the singular value decomposition (SVD) of such Hankel matrices yields a transition tuple for the unknown attacker model in Section \ref{subsec:SVD2tuples}. Finally, we develop a new methods that maps this transition tuples to the FST model of the attacker in Section \ref{subsec:tuples2FST}. The key innovation in our approach is the use of a linear transformation to convert the SVD into a decomposition consisting of basis vectors we denote as \emph{natural}. 

\vspace{-12pt}
\subsection{Transition Tuples}
\label{subsec:trans_tuples}

In this subsection, we explain how a transition tuple captures an FST in matrix form and the relationship between the two representations. When the transition tuple is in a form we call \emph{natural}, the corresponding FST model can be easily derived (as formalized in Lemma \ref{lemma:fst_def_equiv}). A transition tuple consists of a set of transition matrices, each associated with a symbol from the FST's alphabet, where the rows and columns of each matrix are indexed by the FST states. We say that the transition tuple \textit{realizes} a language if the product of these matrices reproduces the FST's language. 


We now define transition tuples explicitly. For a given FST $\mathcal{A}$, a transition tuple provides the language $L( \mathcal{A} )$, but not necessarily the graph structure within Definition \ref{def:fst}. If the transition tuple can also directly provide $\mathcal{A}$ through Definition \ref{def:fst} then we denote it as \emph{natural}.  For a vector $t \in \mathbb{R}^n$, we denote by $t^T$ its transpose, and the standard basis row vectors for $\mathbb{R}^n$ are given by $e_i$ where $i \in \{1, 2, \dots n\}$. 

\begin{definition}
\label{def:trans_tuple}
For an FST alphabet $\mathbf{I} \times \mathbf{O}$, we say $( t_0,  t_{\infty}, \{ T_{\chi} \}_{\chi \in \mathbf{I} \times \mathbf{O}} )$ is a transition tuple $\mathcal{T}$, if
\begin{itemize}
    \item  $t_\infty, t_0  \in \mathbb{R}^n$ ;
    \item  $T_\chi \in \mathbb{R}^{n \times n}$ for all $\chi \in \mathbf{I} \times \mathbf{O} $;
    \item The mapping  defined by $f_\mathcal{T}( \chi_1\chi_2...\chi_n ) = t_0^T T_{\chi_1} T_{\chi_2} \dots T_{\chi_n} t_{ \infty } \in \{0,1\}$ for any sequence in $(\mathbf{I} \times \mathbf{O})^*$. 
\end{itemize}
With a slight abuse of notation for any $ \chi_1 \chi_2 \dots \chi_n = \chi \in ( \mathbf{I} \times \mathbf{O} )^* $ we define $T_\chi = T_{\chi_1} T_{\chi_2} \dots T_{\chi_n}$. We say a transition tuple $\mathcal{T}$ realizes an FST language $L$ when $f_\mathcal{T}( \chi ) = 1$ if and only if $\chi \in L$. A matrix $T$ is \emph{natural} if each row in $T$ is $e_i$ for some $i \in \mathbf{N}$ or the zero vector. A transition tuple $\mathcal{T}$ is \emph{natural} if $t_0$, and each $T_\chi$ are natural while $t_{\infty}$ is binary. 
\end{definition} 

If $\mathcal{T}$ is a transition tuple that realizes the language of an FST $\mathcal{A}$ then the map defined by $f_\mathcal{T}$ provides this language. The function $f_\mathcal{T}:(\mathbf{I} \times \mathbf{0})^* \rightarrow \mathbb{R}$  is defined by the composition of the matrices within the tuple $\mathcal{T}$. The matrix composition order used to compute $f_\mathcal{T}(\chi)$ matches the letter order within the word $\chi$.  If the word $\chi$ is accepted by the FST $\mathcal{A}$ and $\mathcal{T}$ realizes $L(\mathcal{A})$ then $f_\mathcal{T}(\chi)=1$. Conversely, when $\chi$ is not accepted by the FST $\mathcal{A}$ then $f_\mathcal{T} ( \chi ) = 0$.  

In order for a transition tuple to explicitly provide the FST model itself with Definition \ref{def:fst} the transition tuple must be \emph{natural}. More precisely, if $\mathcal{T}$ is a transition tuple that realizes $L(\mathcal{A})$ for some FST $\mathcal{A}$ , then the map $f_\mathcal{T}$ can determine which words belong to $L( \mathcal{A} )$ and which words do not. Nonetheless, the transition tuple defined with Definition \ref{def:trans_tuple} may not provide the initial state, final state, and transition relation needed to provide $\mathcal{A}$ with \ref{def:fst}. A \emph{natural} transition tuple provides the FST structure directly. We will now show how a natural transition tuple provides the FST structure directly with Lemma \ref{lemma:fst_def_equiv}.

\begin{lemma}
\label{lemma:fst_def_equiv}
A natural transition tuple $\mathcal{T}$ defines an FST $\mathcal{A}$ , and $\mathcal{T}$ realizes $L( \mathcal{A} )$. Similarly, given an FST $\mathcal{A}$, there exists a natural transition tuple that realizes $L( \mathcal{A} )$. 
\end{lemma}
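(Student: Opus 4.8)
The plan is to exhibit an explicit dictionary between the coordinates of a natural tuple and the ingredients of an FST in Definition~\ref{def:fst}, and then show that the matrix product $f_\mathcal{T}$ simply traces a single deterministic run through that FST. Throughout I identify the state set $\mathbf{S}$ with the coordinate indices $\{1,\dots,n\}$. Given a natural tuple $\mathcal{T}=(t_0,t_\infty,\{T_\chi\})$, I would read off $\mathcal{A}$ as follows: since $t_0$ is natural it equals a standard basis vector $e_{s_0}$ for a unique index $s_0$ (the degenerate case $t_0=0$, giving the empty language, is handled separately), and I take $s_0$ as the initial state; I set $\mathbf{S_f}=\{i : (t_\infty)_i = 1\}$ using that $t_\infty$ is binary; and I declare $(i,a,b,j)\in\mathbf{Trans}$ exactly when $(T_{(a,b)})_{ij}=1$, together with the mandatory self-loops $(s,\varepsilon,\varepsilon,s)$ required by Definition~\ref{def:fst}.

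The heart of the first direction is the observation that naturalness forces each $T_\chi$ to behave as a partial deterministic transition function: each row is either some $e_j$ or the zero vector, so a state has at most one $\chi$-successor. I would prove by induction on $k$ that $e_{s_0}^{T} T_{\chi_1}\cdots T_{\chi_k}$ is either $e_{s_k}$, where $s_k$ is the unique state reached by reading $\chi_1\cdots\chi_k$ from $s_0$, or the zero vector when no such run exists. The base case is $e_{s_0}^T=e_{s_0}$, and the inductive step is exactly the statement that right-multiplying $e_{s_{k-1}}^T$ by the natural matrix $T_{\chi_k}$ selects row $s_{k-1}$, which is $e_{s_k}$ or zero. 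Evaluating $f_\mathcal{T}(\chi)=e_{s_0}^{T}T_{\chi_1}\cdots T_{\chi_m}t_\infty$ then yields $(t_\infty)_{s_m}$ when the run exists and $0$ otherwise, so $f_\mathcal{T}(\chi)=1$ iff the unique run on $\chi$ ends in $\mathbf{S_f}$, i.e. iff $\chi\in L(\mathcal{A})$. Hence $\mathcal{T}$ realizes $L(\mathcal{A})$.

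For the converse I would use determinization. Viewing $\mathcal{A}$ as a finite automaton over the product alphabet $\mathbf{I}\times\mathbf{O}$ — where $(\varepsilon,\varepsilon)$ is treated as an ordinary letter that self-loops on every state — its accepted language $L(\mathcal{A})$ is regular, so the subset construction yields a DFA $D$ with the same language, a single-valued transition function $\delta$, an initial state, and an accepting set. I then transcribe $D$ into a tuple by inverting the dictionary above: $t_0=e_{q_{\mathrm{init}}}$, $(t_\infty)_i=1$ iff $q_i$ is accepting, and $(T_\chi)_{ij}=1$ iff $\delta(q_i,\chi)=q_j$. Because $\delta$ is single-valued, every row of each $T_\chi$ is $e_j$ or zero, so the tuple is natural; and the computation from the first direction shows it realizes $L(D)=L(\mathcal{A})$.

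The main obstacle is not the matrix bookkeeping — naturalness makes the product collapse to a single run, which is clean — but reconciling the FST's structural axiom $(s,\varepsilon,\varepsilon,s)\in\mathbf{Trans}$ with both constructions, together with the degenerate corner cases. In the forward direction I must ensure the defined relation actually contains every $\varepsilon$-self-loop (adding them by fiat and checking this leaves $f_\mathcal{T}$ unchanged, i.e. that $T_{(\varepsilon,\varepsilon)}$ already acts as the identity on reachable states); in the converse I must ensure $(\varepsilon,\varepsilon)$ is carried through determinization as an identity-like self-loop so that $D$ remains a legitimate source of a natural tuple. I would also separately dispose of the empty-language case $t_0=0$ and of the empty word, whose acceptance is governed by $t_0^{T}t_\infty$ versus $s_0\in\mathbf{S_f}$, so that the two representations stay exactly in step.
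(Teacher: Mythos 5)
Your proposal is correct and follows essentially the same route as the paper: the identical coordinate dictionary between natural tuples and FST ingredients in the forward direction, and a determinization step in the converse (you carry out the subset construction explicitly where the paper instead cites the existence of an equivalent FST with a partial-function transition relation). Your treatment is in fact somewhat more careful than the paper's --- the induction showing $e_{s_0}^{T}T_{\chi_1}\cdots T_{\chi_k}$ collapses to a single basis vector or zero, and the handling of the mandatory $(s,\varepsilon,\varepsilon,s)$ self-loops and the degenerate cases $t_0=0$ and the empty word, are all glossed over in the paper's ``by construction'' step.
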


\begin{proof}
\label{proof:trans_tuple_realize_A}
Let $\mathcal{T} = ( t_0,  t_{\infty}, \{ T_{\chi} \}_{\chi \in \mathbf{I} \times \mathbf{O}})$ be a natural transition tuple, where $t_0,  t_{\infty} \in \mathbb{R}^{n}$ and $T_{\chi} \in \mathbb{R}^{n \times n}$ for each $\chi \in \mathbf{I} \times \mathbf{O}$. Then, we can construct the FST $\mathcal{A} = (\mathbf{S}, \mathbf{s_0}, \mathbf{I}, \mathbf{O}, \mathbf{Trans}, \allowbreak \mathbf{S_f})$ as follows: 
\begin{itemize}
\item $\mathbf{S} = \{\mathbf{s}_1, \ldots, \mathbf{s}_n\}$.
\item $\mathbf{s_0} = s_k$ if and only if $t_0(k) = 1$, i.e., the $k$ entry of $t_0$ being $1$. It is unique since $t_0$ is natural.
\item For each $(i, o) \in  \mathbf{I} \times \mathbf{O}$,
$(s_p, i, o, s_q) \in \mathbf{Trans}$ if and only if $T_{(i, o)} (p, q) = 1$, i.e., the $(p,q)$ entry of $T_{(i, o)}$ being $1$. 
\item $s_k \in \mathbf{S_f}$ if and only if $t_\infty(k) = 1$. It is well-defined since $t_\infty$ is binary.
\end{itemize}
By construction, the transition tuple $\mathcal{T}$ realizes the same language as the FST $\mathcal{A}$.  An identical argument is made to show that some natural transition tuple realizes the language of any FST. Let $\mathcal{A}$ be an FST and $\mathcal{A}'$ be an FST such that $L(\mathcal{A}) = L( \mathcal{A}' )$ with $\mathcal{A'} = (\mathbf{S}', \mathbf{s_0}', \mathbf{I}, \mathbf{O}, \mathbf{Trans}', \allowbreak \mathbf{S_f'})$ where $\mathbf{Trans'}$ is a partial function from $\mathbf{S} \times \mathbf{I} \times \mathbf{O}'$ into $\mathbf{S}'$. The existence of $\mathcal{A'}$ is shown in \cite{wang2023attackresilient}. 
Without loss of generality, assume $\mathbf{S}' = \{1,2, ..., |\mathbf{S}|'\}$. Next define a Transition Tuple $\mathcal{T} = ( t_0,  t_{\infty}, \{ T_{\chi} \}_{\chi \in \mathbf{I} \times \mathbf{O}} )$ as 
\begin{itemize}
\item For each $\chi \in \mathbf{I} \times \mathbf{O}$ set $T_{\chi} \in \mathbb{R}^{| S' | \times | S' | }$ with $e_{ p }^T T_{ \chi } = e_{ q } $ if $(p, i, o, q) \in \mathbf{Trans'}$ where $\chi = (i,o)$ and $e_{ p }^T T_{ \chi } = 0 $ otherwise. 
\item $t_0 \in \mathbb{R}^{| \mathbf{S}' |} $ is a column vector with $t_0 = e_{ s_0 }$;
\item $t_{\infty} \in \mathbb{R}^{ | \mathbf{S}' | }$ is a column vector with $e_{ s }^T t_{\infty} = 1$ if $s \in \mathbf{S_f}' $. Otherwise, $e_{ s }^T t_{\infty} = 0$.
\end{itemize}
We will show that for any $\chi \in (\mathbf{I} \times \mathbf{O})^*$, $\chi \in L(A)$ iff $f_\mathcal{T}( \chi ) = 1$. First, suppose $\chi = \chi_1\chi_2 \dots \chi_n \in L( \mathcal{A} )$ and let $\pi = \{ (s_{k}', i_k, o_k, s_{k+1}' )\}^n_{k = 1}$ be the unique path associated with $\chi$ in $\mathcal{A}'$. Since $\pi$ is a valid path observe that for all $k \in \{0,1, \dots, n\}$, we have $ e_{ s_k' }^T T_{ \chi } = e_{ s_{k+1}' } $, and $e_{ s_{n+1}' }^T t_{\infty} $ = 1. Moreover, $t_0 T_{ \chi^m } = e_{ s_{ m+1 }  }$ where $\chi^m =\chi_1 \dots \chi_m $ because $t_0 = e_{ s_0  }$. Thus, 
\begin{equation*}
    f_\mathcal{T} (\chi) = t_0^T T_\chi t_{\infty} = e_{ s_{ n+1 }  } t_{\infty} = 1.
\end{equation*}
Now consider the case where $\chi = \chi_1\chi_2 \dots \chi_n \not \in L( \mathcal{A} )$. Then any path $\pi = \{ (s_{k}, i_k, o_k, s_{k+1} )\}^n_{k = 1}$ with $\chi = (i_1, o_1 )...(i_n, o_n)$ must fail to satisfy one of two properties. Each path $\pi$ makes an invalid transition or does not end in the final state. In both cases, $f_\mathcal{T} ( \chi ) = 0$ because either $ e_{ s_k }^T T_{ \chi } = 0 \neq e_{ s_{k+1} } $ for some $k$, or $ e_{ s_{k+1}  }^T t_{\infty} = 0 $ respectively. Thus,  $\mathcal{T}$ realizes $L( \mathcal{A} ) = L( \mathcal{A}' )$ . 
\end{proof}
Lemma \ref{lemma:fst_def_equiv} shows the equivalence between a natural transition tuple that realizes $L( \mathcal{A} )$ and the FST $\mathcal{A}$ itself. If $\mathcal{T}$ is a natural transition tuple that realizes $L(\mathcal{A})$ for some FST $\mathcal{A}$ then we can easily obtain the initial and final states along with the transition relation defining $\mathcal{A}$. For example, each matrix $T_\chi$ in the nautral transition tuple provides all possible single-step transitions labeled by $\chi \in \mathbf{I} \times \mathbf{0}$. Similarly, $t_0$ provides the initial state and $t_\infty$ provides the final states for the FST $\mathcal{A}$. 

\subsection{Deriving transition tuples from Hankel matrices with SVD}
\label{subsec:SVD2tuples}
In this subsection, we demonstrate how spectral learning transforms a set of Hankel matrices into a transition tuple $\mathcal{T}$ that represents the unknown attacker FST model $\mathcal{A}$. Each Hankel matrix in the set $H_z$ encodes information about sample attacks that involve sequences of multiple letters. In contrast, the transition tuple $\mathcal{T}$ provides a breakdown of the attacker model on a per-letter basis, assigning a transition matrix to each letter in the alphabet. Singular value decomposition (SVD) can rearrange the matrices within $H_z$ to provide information about $\mathcal{A}$ on a per-letter basis. The transition tuple $\mathcal{T}$ can then be derived from such a rearrangement. This process is formalized in Lemma \ref{lemma:factorization2trans_tuple}.



Each transition matrix $T_\chi$ is computed using the SVD of both the basis Hankel matrix $H_\Theta$ and the Hankel matrix $H_\chi$ corresponding to letter $\chi$. Although $H_\chi$ contains essentially the same word information as $H_\Theta$, every word in $H_\chi$ includes $\chi$. This difference allows us to combine the SVD results to isolate the model information specific to $\chi$. To help clarify this extraction process, we introduce the $\mathrm{pre}$ and $\mathrm{post}$ functions, whose formal definitions are provided in Definition \ref{def:pre/post}.


\begin{definition}
    \label{def:pre/post}
     Given any transition tuple $\mathcal{T} = ( t_0,  t_{\infty}, \allowbreak \{ T_{\chi} \}_{\chi \in \mathbf{I} \times \mathbf{O}} )$ define  $\mathrm{pre} , \mathrm{post} : ( \mathbf{I} \times \mathbf{O} )^* \rightarrow \mathcal{R}^n$ as 
\begin{equation}
\label{eq:pre_post_def}
    \mathrm{pre} ( \chi ) = t_0^T T_{\chi} \text{ and } \mathrm{post} ( \chi ) =  T_{\chi} t_{\infty}.
\end{equation}
When $\mathrm{pre}(\chi_1) \mathrm{post} (\chi_2) = f_\mathcal{T}( \chi )$ for any $\chi_1 \chi_2 = \chi$ within $( \mathbf{I} \times \mathbf{O} )^* $ we say $\mathrm{pre}$, $\mathrm{post}$ realize $f_\mathcal{T}$. Note that $\mathrm{pre} (\chi)$ is a row vector and $\mathrm{post} (\chi)$ is a column vector.
\end{definition}
Consider $\chi_1, \chi_2$ and $\chi_3$ from Example \ref{ex:fst input-output} and let $\mathcal{T}$ be a transition tuple that realizes $L(\mathcal{A})$. Observe that we have 
\begin{equation}
\label{eq:pre_post_intuit}
\begin{split}
H_{\chi_3} (\chi_1, \chi_2 )
& = f_\mathcal{T}( \chi_1 \chi_3 \chi_2 )  = t_0 T_{\chi_1} T_{\chi_3} T_{\chi_2} t_{\infty}^T  
\\ & = \mathrm{pre}(\chi_1) T_{\chi_3} \mathrm{post}(\chi_2)
\end{split}
\end{equation} 
where the first, second, and third equalities fall immediately from Definitions \ref{def:H_z}, \ref{def:trans_tuple}, and \ref{def:pre/post}  respectively. Furthermore, if $\mathrm{pre}(\chi_1)$ and $\mathrm{post}(\chi_2)$ were known then one could solve for $T{ \chi_3 }$ with
\begin{equation*}
\mathrm{pre}(\chi_1) T_{\chi_3} \mathrm{post}(\chi_2) = 1
\end{equation*}
because $\chi_1 \chi_3 \chi_2 \in \mathcal{D}$. To compute $\mathrm{pre}(\chi_1)$ and $\mathrm{post}(\chi_2)$,  we note that $H_\theta(\chi_1, \chi_2) =     \mathrm{pre}(\chi_1)  \mathrm{post}(\chi_2)$. This relationship extends over the entire basis $(\Psi, \Gamma)$. In other words, if we know the sets $\mathrm{pre}(\Psi)$ and $\mathrm{post} (\Gamma)$, then the relation in \eqref{eq:pre_post_intuit} allows us to determine each transition matrix $T_\chi$. Moreover, decomposing $H_\Theta$ yields the desired evaluations of $\mathrm{pre}$ and $\mathrm{post}$ over the basis. We use SVD to obtain matrices $P$ and $S$, which have full column and full row rank respectively, so that $H_\theta = PS$. If $H_\Theta = \Sigma \Omega V'$ is the SVD then by convention $P= \Sigma U$ and $S=V'$. Here, $U$and $V'$ contain the singular vectors and $\Omega$ contains the singular values. In this decomposition, $P$  and $S$ contains the evaluations of $\mathrm{pre}(\Psi)$ and $\mathrm{post} (\Gamma)$ respectively. 

Now, that we have $\mathrm{pre}, \mathrm{post}$ evaluated over $\Psi$ and $\Gamma$ respectively and $H_{\chi_1}, H_{\chi_2}, H_{\chi_3}$ we can solve for $T_{\chi_1}, T_{\chi_2}$ and $T_{\chi_3}$. We can extrapolate \eqref{eq:pre_post_intuit} over any $\chi$ within the attacker alphabet to obtain $H_{\chi} = PT_{\chi}S$.
Lastly, the pseudoinverses of $P$ and $S$ can solve for the transition tuple $\mathcal{T}$ with
\begin{equation}
\label{eq:trans_solver}
   T_{\chi} = P^L H_{\chi} S^R \text{ ,  } t_0 = P(1,:) \text{ ,  } t_{\infty} = S(:,1).
\end{equation}
where $P^L$ and $S^R$ denote the left and right pseudoinverse of $P, S$ respectively.  It is important to note that \eqref{eq:trans_solver} only holds if the selected mask is a basis and $P, S$ are full column and row rank respectively. Lemma \ref{lemma:factorization2trans_tuple} below formally describes how we obtain the language of the unknown FST through sufficient samples using SVD and matrix algebra on $H_z$; the language of the unknown FST is in the form of a transition tuple that realizes such a language.

\begin{lemma}
    \label{lemma:factorization2trans_tuple}
Let $\mathcal{A}$ be an FST and assume $\Theta$ is a basis for $L( \mathcal{A} )$. Given $(P, S)$ as a full rank decomposition of $H_{\Theta}$, and $H_z$ then \eqref{eq:trans_solver} provides a transition tuple $\mathcal{T}$ such that $\mathcal{T}$ realizes $L( \mathcal{A} )$. 
\end{lemma}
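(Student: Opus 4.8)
The plan is to reduce the claim to a change-of-basis (gauge) argument between two full-rank factorizations of $H_\Theta$. By Lemma \ref{lemma:fst_def_equiv} there is a natural transition tuple $\mathcal{T}^\star = (t_0^\star, t_\infty^\star, \{T_\chi^\star\})$ of dimension $r = |\mathbf{S}|$ that realizes $L(\mathcal{A})$. I will show that the tuple $\mathcal{T}$ produced by \eqref{eq:trans_solver} is the conjugate of $\mathcal{T}^\star$ by a single invertible matrix $M$, and that such conjugation leaves the map $f_\mathcal{T}$ unchanged, so that $\mathcal{T}$ realizes the same language as $\mathcal{T}^\star$.

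First I would record the exact matrix identities that the hypotheses force. Let $\hat P \in \mathbb{R}^{|\Psi| \times r}$ have rows $\mathrm{pre}^\star(\psi) = (t_0^\star)^T T_\psi^\star$ and let $\hat S \in \mathbb{R}^{r \times |\Gamma|}$ have columns $\mathrm{post}^\star(\gamma) = T_\gamma^\star t_\infty^\star$, i.e.\ the maps of \eqref{eq:pre_post_def} for $\mathcal{T}^\star$ evaluated over the basis. Because $H_z$ represents $\mathcal{A}$ (Definition \ref{def:H_z}), the no-false-positive/negative condition gives $H_\Theta(\psi,\gamma) = f_{\mathcal{T}^\star}(\psi\gamma)$ and $H_\chi(\psi,\gamma) = f_{\mathcal{T}^\star}(\psi\chi\gamma)$ entrywise; expanding these through Definition \ref{def:trans_tuple} exactly as in \eqref{eq:pre_post_intuit} yields $H_\Theta = \hat P \hat S$ and $H_\chi = \hat P\, T_\chi^\star\, \hat S$ for each $\chi$. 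Since $\Theta$ is a basis, $\mathrm{rank}(H_\Theta) = r$ (Definition \ref{def:basis}); as $H_\Theta = \hat P \hat S$ with inner dimension $r$, both $\hat P$ and $\hat S$ must therefore have full rank $r$, so $H_\Theta = \hat P \hat S$ is itself a full-rank factorization.

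Next I would invoke uniqueness of full-rank factorizations up to an invertible factor: since $H_\Theta = PS = \hat P \hat S$ are two full-rank factorizations of the same rank-$r$ matrix, there is an invertible $M \in \mathbb{R}^{r \times r}$ with $P = \hat P M$ and $S = M^{-1}\hat S$. Using $P^L P = I$ and $S S^R = I$ (valid since $P$ has full column rank and $S$ full row rank), I substitute $H_\chi = \hat P\, T_\chi^\star\, \hat S$ into \eqref{eq:trans_solver} and let the gauge cancel, giving $T_\chi = P^L \hat P\, T_\chi^\star\, \hat S S^R = M^{-1} T_\chi^\star M$. The same computation on the distinguished $\varepsilon$-row and $\varepsilon$-column (indexed first in $\Psi,\Gamma$ by Definition \ref{def:basis}, with $T_\varepsilon = I$ as the empty product) gives $t_0^T = (t_0^\star)^T M$ and $t_\infty = M^{-1} t_\infty^\star$. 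Finally, for any word $\chi = \chi_1 \cdots \chi_k$ the factors of $M$ telescope,
\begin{equation*}
f_\mathcal{T}(\chi) = t_0^T T_{\chi_1}\cdots T_{\chi_k}\, t_\infty = (t_0^\star)^T T_{\chi_1}^\star \cdots T_{\chi_k}^\star\, t_\infty^\star = f_{\mathcal{T}^\star}(\chi),
\end{equation*}
so $\mathcal{T}$ and $\mathcal{T}^\star$ define the same map, and $\mathcal{T}$ realizes $L(\mathcal{A})$.

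I expect the delicate point to be justifying the exact identities $H_\Theta = \hat P \hat S$ and $H_\chi = \hat P\, T_\chi^\star\, \hat S$, rather than the concluding linear algebra. These hold entrywise only because the \emph{represents} hypothesis rules out recording errors: a spurious zero in $H_\chi$ for a word actually in $L(\mathcal{A})$ would break the identity, and the recovered tuple could then realize a strictly smaller language. I would therefore invoke Definition \ref{def:H_z} precisely at that step, and separately confirm that the rank hypothesis forces $\hat P,\hat S$ (not merely $P,S$) to be full rank, since it is exactly this that makes the gauge matrix $M$ exist and be invertible.
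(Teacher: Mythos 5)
Your proof is correct, but it takes a different (and more self-contained) route than the paper: the paper's entire proof of this lemma is a deferral to Lemma~2 of Balle et al.\ \cite{balle_spectral_2014}, obtained by viewing the FST as a weighted finite automaton with all transition weights equal to $1$, whereas you actually prove that cited lemma from scratch via the gauge argument --- exhibiting the reference factorization $H_\Theta = \hat P \hat S$, $H_\chi = \hat P\,T_\chi^\star\,\hat S$ induced by a natural tuple, relating it to the given $(P,S)$ by a unique invertible $M$, and letting $M$ telescope in $f_\mathcal{T}$. What your version buys is that it makes explicit exactly where each hypothesis is consumed, which the citation hides: the \emph{represents} condition of Definition~\ref{def:H_z} (no false positives or negatives) is what licenses the entrywise identities $H_\Theta(\psi,\gamma)=f_{\mathcal{T}^\star}(\psi\gamma)$ and $H_\chi(\psi,\gamma)=f_{\mathcal{T}^\star}(\psi\chi\gamma)$, and the basis rank condition is what forces $\hat P,\hat S$ to have full rank $r$ so that $M$ exists and is invertible. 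You are right to flag that the lemma as literally stated omits the no-error hypothesis on the $H_\chi$; it is genuinely needed, and the paper only imposes it later in Theorem~\ref{theorem:summary}. The one point to state more carefully is the dimension of the reference tuple: Lemma~\ref{lemma:fst_def_equiv} produces a natural tuple whose dimension is the state count of the (observable) FST used in its construction, and your argument needs that count to equal $r=\mathrm{rank}(H_\Theta)$; this holds under the paper's standing convention that $\mathcal{A}$ is minimal and the basis condition $\mathrm{rank}(H_\Theta)=|\mathbf{S}|$, but it is an assumption you are inheriting rather than proving.
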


\begin{proof}
Let $\mathcal{A}$, $\mathcal{T}$, $\Theta$, $(P, S)$, and $H_z$ be defined as above. From Lemma \ref{lemma:fst_def_equiv} we know that such a transition tuple exists by construction. The proof follows immediately from Lemma 2 in \cite{balle_spectral_2014} where we equate the FST alphabet to a weighted finite automaton alphabet and set the weight of every possible transition to 1. 
\end{proof}

\vspace{-12pt}
\subsection{Natural Decomposition for FST Model procurement }
\label{subsec:tuples2FST} 

This section presents a novel process for extracting an FST model for an unknown attacker from the transition tuple derived in Section \ref{subsec:SVD2tuples}. Our innovative approach adjusts the SVD of the Hankel matrix $H_\Theta$ so that the resulting transition matrices provides the FST model structure. More precisely, since each transition matrix $T_\chi$ is computed with SVD of $H_\Theta$ and $H_\chi$ we can adjust $H_\Theta$ to ensure each $T_\chi$ is natural providing the FST model. First, we will present how adjusting the SVD of $H_\theta$ through a linear transformation will adjust the transition tuple by the same transformation. Then, we derive a linear transformation that when applied to the SVD of $H_\theta$ forces the resulting transition tuple to be natural. Such a procedure has yet to be developed until now.

\subsubsection{Equivalent Condition for natural transition tuple}
\label{subsubsec:equiv_determ}
Adjusting the basis decomposition, $H_\Theta = PS$, affects the form of the transition tuple obtained with \eqref{eq:trans_solver}. In particular if we transform $P,S$ via linear transformation then the subsequent transition tuple is transformed by the same linear transformation. This is made evident through \eqref{eq:trans_solver}. To illustrate, let $B$ be an invertible matrix of appropriate size, $H_\Theta = PS$ be a basis decomposition of full column and row rank respectively and $H_z$ contain the subsequent Hankel matrices defined with $H_\Theta$. Then $P_{new} = PB$, and $S_{new}= B^{-1}S$ will also provide a full rank decomposition and lead to the following transition tuple $\mathcal{T}^{new} = ( t_0^B,  t_{\infty}^B, \{ T_{\chi}^B \}_{\chi \in \mathbf{I} \times \mathbf{O}})$ 
\begin{equation*}
   T_{\chi}^{B} = 
   B^{-1}P^L H_{\chi} S^R B = 
   P_{new}^L H_{\chi} S_{new}^R, 
\end{equation*}
\begin{equation*}
t_0^{B} = 
P(1,:)B =  
P_{new}(1,:), 
\end{equation*}
\begin{equation*}
    t_{\infty}^{B} = 
    B^{-1}S(:,1) = 
    S_{new}(:,1).
\end{equation*}
Although the $\mathcal{T}^{new}$ has different matrices it still realizes the same language. Thus, the decomposition of $H_\Theta$ can directly affect the transition tuples. 



If the basis decomposition satisfies a condition we call \textit{natural}, then the transition tuple obtained via \eqref{eq:trans_solver} will also be natural. In other words, when the decomposition consists primarily of basis vectors, the resulting matrix compositions yield binary matrices, leading to a natural transition tuple. We now provide a formal definition of a natural decomposition. 
\begin{definition}
\label{def:natural_fact}
A decomposition $(P, S)$ is \emph{natural} if each row in $P$ is $e_i$ for some $i \in \mathbf{N}$ or the zero vector, and $S$ is binary. 
\end{definition}
Now, Lemma \ref{lemma:equiv_binary} demonstrates how enforcing a natural decomposition guarantees that the transition tuple will be natural.  We accomplish this by exploiting the interdependence between the $\mathrm{pre}$, $\mathrm{post}$ functions and basis decomposition shown in \eqref{eq:pre_post_intuit}.  

\begin{lemma}
\label{lemma:equiv_binary}
Let $\mathcal{A}$ be an FST. Given a basis $\Theta$ for $L( \mathcal{A} )$ and ($P, S$) a full rank decomposition on $H_\Theta$. The transition tuple $\mathcal{T} = (t_0,  t_{\infty}, \{ T_{\chi} \}_{\chi \in \mathbf{I} \times \mathbf{O} })$ derived from the above decomposition with \eqref{eq:trans_solver} is natural if and only if the pair $(P, S)$ induces a natural decomposition on $H_{\Theta}$. 
\end{lemma}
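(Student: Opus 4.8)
The plan is to prove both implications of the equivalence by combining two elementary closure facts about natural objects with the change-of-basis freedom in full-rank decompositions. The closure facts are: (i) if $v$ is a natural row vector (some $e_i$ or the zero vector) and $T$ is a natural matrix, then $vT$ is again natural, since $e_i T$ is exactly row $i$ of $T$, which is $e_j$ or zero by Definition \ref{def:trans_tuple}; and (ii) if $T$ is natural and $w$ is a binary column vector, then $Tw$ is binary, because its $k$-th entry equals $w(j)$ or $0$ according as row $k$ of $T$ is $e_j$ or zero. Iterating (i) shows that a product of natural matrices applied to a natural row vector stays natural, and iterating (ii) shows that such a product applied to a binary column stays binary. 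Throughout I use the identification established in Section \ref{subsec:SVD2tuples}, namely that the rows of $P$ are the values $\mathrm{pre}(\psi)$ and the columns of $S$ are the values $\mathrm{post}(\gamma)$ of the derived tuple $\mathcal{T}$, together with $H_\chi = P T_\chi S$.

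For the forward direction (tuple natural $\Rightarrow$ decomposition natural) I would argue directly. Row $\psi$ of $P$ equals $\mathrm{pre}(\psi) = t_0^T T_\psi$; since $t_0$ is natural and each $T_\chi$ is natural, fact (i) gives that every row of $P$ is $e_i$ or zero, so $P$ is natural. Column $\gamma$ of $S$ equals $\mathrm{post}(\gamma) = T_\gamma t_\infty$; since $t_\infty$ is binary and each $T_\chi$ is natural, fact (ii) gives that every column of $S$ is binary. By Definition \ref{def:natural_fact} the pair $(P,S)$ is therefore a natural decomposition.

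For the reverse direction (decomposition natural $\Rightarrow$ tuple natural), the entries $t_0 = P(1,:)$ and $t_\infty = S(:,1)$ are immediately natural and binary respectively, so the crux is showing each $T_\chi$ is natural. Here I would pass through a canonical reference point: by Lemma \ref{lemma:fst_def_equiv} there is a natural tuple $\mathcal{T}^\ast$ realizing $L(\mathcal{A})$, whose $\mathrm{pre}^\ast,\mathrm{post}^\ast$ values assemble into a natural decomposition $(P^\ast, S^\ast)$ of $H_\Theta$. Because $\Theta$ is a basis, $\mathrm{rank}(H_\Theta) = |\mathbf{S}|$ (Definition \ref{def:basis}), which forces both $(P^\ast,S^\ast)$ and $(P,S)$ to be full-rank factorizations of the same matrix; hence they are related by a unique invertible $B$ with $P = P^\ast B$ and $S = B^{-1} S^\ast$, and the derived transition matrices satisfy $T_\chi = B^{-1} T_\chi^\ast B$ (this is the change-of-basis computation recorded just before the lemma). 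Since $P^\ast$ is natural with full column rank $|\mathbf{S}|$, each standard basis row vector $e_i$ occurs as some row of $P^\ast$; the corresponding row of $P = P^\ast B$ is then $e_i B$, i.e. row $i$ of $B$, which is natural because $P$ is natural. As $B$ is invertible its rows are nonzero and distinct, so every row of $B$ is a distinct standard basis vector and $B$ is a permutation matrix. Consequently $T_\chi = B^{-1} T_\chi^\ast B$ is merely $T_\chi^\ast$ with its rows and columns permuted, which preserves the natural form; together with the already-verified $t_0$ and $t_\infty$ this shows $\mathcal{T}$ is natural.

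The main obstacle is precisely this reverse $T_\chi$ step: unlike $t_0$ and $t_\infty$, the transition matrices are produced by the pseudoinverse formula $T_\chi = P^L H_\chi S^R$ in \eqref{eq:trans_solver}, and a direct computation does not obviously yield the $e_i$-or-zero row structure, since mere binarity of $P^L H_\chi S^R$ is weaker than naturality. The device that overcomes this is to collapse the ambiguity in the factorization to the single matrix $B$ and then use the basis rank condition to force $B$ to be a permutation, after which naturality is preserved by conjugation. A secondary point to check carefully is that the basis hypothesis indeed guarantees that all of $e_1,\dots,e_{|\mathbf{S}|}$ appear among the rows of $P^\ast$ (so that every row of $B$ is pinned down), which is exactly where $\mathrm{rank}(H_\Theta) = |\mathbf{S}|$ enters.
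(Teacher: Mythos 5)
Your forward direction matches the paper's: both observe that the rows of $P$ and the columns of $S$ are the $\mathrm{pre}/\mathrm{post}$ values of the derived tuple, hence natural respectively binary by closure of natural vectors under multiplication by natural matrices. Your reverse direction, however, takes a genuinely different route. The paper argues by contradiction: it picks the first ``reachable'' letter $\chi_n$ whose transition matrix has an unnatural row $j$, notes that some prefix $\chi_1\dots\chi_k$ satisfies $\mathrm{pre}(\chi_1\dots\chi_k)=e_j$, and concludes that the row of $P$ indexed by $\chi_1\dots\chi_k\chi_n$ equals the unnatural $j$-th row of $T_{\chi_n}$ --- implicitly assuming that this witnessing prefix actually lies in $\Psi$. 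You instead compare $(P,S)$ to the canonical natural decomposition $(P^{*},S^{*})$ supplied by Lemma \ref{lemma:fst_def_equiv}, invoke uniqueness of full-rank factorizations to write $P=P^{*}B$, $S=B^{-1}S^{*}$, $T_\chi=B^{-1}T^{*}_\chi B$, and then use $\mathrm{rank}(H_\Theta)=|\mathbf{S}|$ to force every standard basis vector to appear among the rows of $P^{*}$, hence every row of $B$ to be a nonzero, distinct natural row, hence $B$ a permutation matrix; conjugating a natural matrix by a permutation preserves naturality. Your argument is more algebraic and closes the gap about prefixes needing to lie in $\Psi$ that the paper's contradiction leaves open; its cost is that it leans on the factorization-uniqueness fact that the paper only isolates later (inside the proof of Lemma \ref{lemma:natural_fact}) and on the identification $T^{*}_\chi=(P^{*})^{L}H_\chi (S^{*})^{R}$, which needs the same ``no errors in $H_\chi$'' hypothesis the paper also uses implicitly. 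Both proofs are correct modulo these shared implicit assumptions.
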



\begin{proof}
($\Longrightarrow$) Let $\mathcal{A}$, $\mathcal{T}$, and $\Theta$ be as defined above and $\mathcal{A}'$ be observable while realizing the same language as $\mathcal{A}$. Assume $\mathcal{T}$ is natural. Hence, for any $\chi \in (\mathbf{I} \times \mathbf{O})^* $ we have that $\mathrm{pre} (\chi)$ will be a natural vector and $\mathrm{post} (\chi)$ will be binary because these functions are all compositions of the matrices above. Therefore, $(P, S)$ is natural because it is simply the concatenation of $\mathrm{pre}, \mathrm{post} $ over $\Theta \subset ( \mathbf{I} \times \mathbf{O} )^* \times ( \mathbf{I} \times \mathbf{O} )^*$. (see \eqref{eq:pre_post_intuit}). Furthermore, such a decomposition exists because Lemma \ref{lemma:fst_def_equiv} showed that any FST can be realized by some natural transition tuple.

($\Longleftarrow$) Now, suppose $(P, S)$ is a natural decomposition on $H_\Theta$. We will proceed using contradiction. If $t_0 \neq e_j$ for some $j \in \mathbf{N}$ then the first row of $P$ will not be natural. Similarly, if $t_{\infty}$ is not binary then the first column of $S$ will not be binary. Thus, in either case, the decomposition $(P, S)$ is not natural. 

Lastly, assume that there exist $\chi \in \mathbf{I} \times \mathbf{O} $ and $j \in \mathbf{N}$ such that the j-th row of $T_{\chi}$ is not natural and that $\chi \in \Phi$. The last assumption simply requires that we have observed $\chi$ while sampling the FST. Let $\chi_n$ and $j$ be as such. Let $\chi_1 \dots \chi_k \chi_n \in L( \mathcal{A}')$ where each $T_{\chi_s}$ with $1 \leq s \leq k$ is natural. In other words, $\chi_n$ is the first reachable letter in $\mathcal{A}'$ with a matrix that is not natural. Such a word exists by assumption.

Observe that if $\chi_1 \dots \chi_k \chi_n \in L(\mathcal{A})$ then $\mathrm{pre} (\chi_1 \dots \chi_k) = e_j$. This is because their transition matrices and $t_0$ (1st row of P) are natural. Thus, the row of $P$ given by $ \mathrm{pre}( \chi_1 \dots \chi_k \chi_n) =\mathrm{pre} (\chi_1 \dots \chi_k)T_{\chi_{n}}$  will be unnatural because it is simply the $j$th row of $T_{\chi_{n}}$. This violates our natural decomposition assumption. Therefore, the transition tuple must be natural.
\end{proof}

\subsubsection{FST Procurement}
\label{subsubsec:fst_procurement}

We now demonstrate how to derive a natural decomposition of the basis matrix $H_\Theta$ that ultimately yields a natural transition tuple. First, we show that the linearly independent rows obtained from the SVD-based decomposition link the SVD with the natural decomposition. Then, we use a linear transformation based on these independent rows to convert the original decomposition into a natural one. Once we show how to obtain a natural decomposition we combine these results into Theorem \ref{theorem:summary} to summarize how attacker data samples can be mapped to a representative FST model.

To analytically enforce a natural decomposition on $H_z$, we show that such a decomposition can be achieved through a linear transformation. First, we prove that any two full-rank decompositions are related by a linear transformation. This establishes that the natural decomposition and the SVD-based decomposition are algebraically connected via a linear transformation. Exploiting this connection, enables the direct computation of this transformation. The rows of this transformation matrix are already present in the original decomposition because the left matrix in the natural decomposition matrix consists of basis vectors. These results are formalized in Lemma \ref{lemma:natural_fact}.
\begin{lemma}
    \label{lemma:natural_fact}
    If $\Theta$ be a basis for an FST $\mathcal{A}$ and $H_\Theta = 
    PS$ is a full rank decomposition then there exists a transformation $B$ such that $P_{new} = PB^{-1} \text{ ,  } S_{new} = BS$ define a natural decomposition on $H_\Theta$. The transformation $B$ is found by extracting linearly independent rows, $p_i$, of $P$. Then vertically concatenating the row vectors into a square matrix, $B$. In other words, set 
\begin{equation}
\label{eq:b_construct}
    B = 
    \begin{bmatrix}
        p_1^T &  
        p_2^T &  
        \dots &  
        p_r^T
    \end{bmatrix}^T
\end{equation}
where $r$ is the rank of $H_{\Theta}$. 
\end{lemma}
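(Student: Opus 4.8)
The plan is to combine two ingredients: the linear-algebra fact (already noted in the text preceding this lemma) that any two full-rank factorizations of a fixed matrix are related by an invertible change of basis, and the existence, guaranteed by Lemma \ref{lemma:fst_def_equiv}, of at least one \emph{natural} decomposition of $H_\Theta$. First I would apply Lemma \ref{lemma:fst_def_equiv} to obtain a natural transition tuple realizing $L(\mathcal{A})$; stacking $\mathrm{pre}(\psi)$ over $\psi \in \Psi$ as rows and $\mathrm{post}(\gamma)$ over $\gamma \in \Gamma$ as columns produces a pair $(P^{*}, S^{*})$ with $H_\Theta = P^{*} S^{*}$, where each row of $P^{*}$ is some $e_i$ or the zero vector and $S^{*}$ is binary. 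This is exactly a natural decomposition in the sense of Definition \ref{def:natural_fact}, and under the basis/representation hypotheses the entry $H_\Theta(\psi,\gamma)$ coincides with $\mathrm{pre}(\psi)\,\mathrm{post}(\gamma)$, so $(P^{*},S^{*})$ genuinely factorizes the sampled $H_\Theta$. Since $\Theta$ is a basis, $\mathrm{rank}(H_\Theta) = |\mathbf{S}| = r$, so $(P,S)$ and $(P^{*},S^{*})$ are both full-rank factorizations of inner dimension $r$; hence there is an invertible $r\times r$ matrix $B_0$ with $P^{*} = P B_0^{-1}$ and $S^{*} = B_0 S$.

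The key structural step is to read off what the identity $P = P^{*} B_0$ forces on the rows of $P$. Because each row of $P^{*}$ equals $e_i$ for some state index $i$ or the zero vector, the corresponding row of $P$ equals either the $i$-th row of $B_0$ or the zero vector. Moreover, since $P^{*}$ has full column rank $r$, its rows span all of $\mathbb{R}^{1\times r}$, which forces every standard basis row $e_1,\dots,e_r$ to occur among them, so every row of $B_0$ is realized. I would therefore conclude that $P$ has at most $r+1$ distinct rows: the $r$ linearly independent rows of $B_0$, together with possibly the zero row. This rigidity is the crux of the proof and is precisely where the FST and basis hypotheses enter; absent the existence of a natural decomposition there would be no reason for the rows of an arbitrary $P$ to take so few distinct values.

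Given this, the construction in the lemma falls out. Any collection of $r$ linearly independent rows of $P$ can neither include the zero row nor repeat a value, so it must consist of all $r$ rows of $B_0$ in some order; stacking them as in \eqref{eq:b_construct} gives $B = \Pi B_0$ for a permutation matrix $\Pi$. Then $P_{new} = P B^{-1} = P B_0^{-1}\Pi^{-1} = P^{*}\Pi^{-1}$, and since right-multiplying a row $e_i$ (or the zero vector) by the permutation $\Pi^{-1}$ yields another standard basis row (or the zero vector), every row of $P_{new}$ is natural. Likewise $S_{new} = B S = \Pi B_0 S = \Pi S^{*}$ merely permutes the rows of the binary matrix $S^{*}$, so $S_{new}$ remains binary; and $P_{new} S_{new} = P S = H_\Theta$, so the pair indeed factorizes $H_\Theta$ and satisfies Definition \ref{def:natural_fact}. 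The main obstacle I anticipate is not the $r$ selected rows of $P_{new}$, which become standard basis vectors by construction, but rather showing that the \emph{remaining} rows of $P_{new}$ are natural as well; the distinct-rows argument of the previous paragraph resolves exactly this, and it also explains why the particular choice of linearly independent rows is immaterial, since different choices only relabel the states through the permutation $\Pi$.
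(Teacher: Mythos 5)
Your proof is correct and follows essentially the same route as the paper's: obtain a natural decomposition from Lemma~\ref{lemma:fst_def_equiv}, relate it to $(P,S)$ by the invertible change of basis connecting any two full-rank factorizations, and observe that the rows of $P$ must then be rows of that change-of-basis matrix. You are in fact more careful than the paper on two points it glosses over---the possible zero rows of $P$ inherited from zero rows of the natural factor, and the permutation ambiguity in which $r$ independent rows are selected---both of which you resolve correctly.
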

 \begin{proof}
          Let $\Theta$ and $\mathcal{A}$ be as such. By Lemma \ref{lemma:fst_def_equiv} there exists a NTT that realizes $L( \mathcal{A} )$. Lemma \ref{lemma:factorization2trans_tuple} showed that a decomposition on $H_\Theta$ derived from an NTT will also be natural. Hence, such a decomposition exists. 
     Let $ (P, S)$ and $(P_{new}, S_{new})$ be full rank decompositions on $H_\Theta$. First, we will show $P_{new} = PA$ and $S_{new} = BS $ for some $B, A$ because $P, P_{new}$ span the same subspace of $\mathbf{R}^n$. Then, $AB=I$ will be derived from the algebra below. By definition, we have 
\begin{equation}
\label{eq:factor}
H_\Theta = P_{new} S_{new} = PS.
\end{equation}
Since the columns in $P, P_{new}$ form bases for the column space of $H_\Theta$ and share the same rank, we have that $P, P_{new}$ generate the column space of $H_\Theta$.  Hence, each column in $P_{new}$ can be expressed as a linear combination of the columns in $P$ and vice-versa. This combination is unique because $P, P_{new}$ are both full column ranks. Therefore, there exists a unique matrix $A$ such that $P_{new} = PA$. An identical argument with the row space of $H_\Theta$ shows the unique existence of $B$ with $S_{new} = BS $. Now, \eqref{eq:factor} can be rewritten as $H_\Theta = P_{new} S_{new} = P AB S = PS$. This implies that $AB=I$, $A = B^{-1}$, and uniqueness immediately follows. This is easily extended to the special case where $(P_{new}, S_{new})$ is a natural decomposition. 

Now let $B$ be such that $H_\Theta = P_{new} S_{new} = P B^{-1}B S$. Observe that each row in $P_{new}$ is a standard basis vector, $e_i$. Hence, each row in $P$ is one of the $r$ rows in $B$ because $P_{new} B = P$. Thus, the matrix $P$ is entirely comprised of the rows in $B$ and vice-versa. Moreover, there are only $r$ different rows in $P$. The rest are repeats. We have now proved that \eqref{eq:b_construct} yields a natural decomposition. This completes the proof. 
 \end{proof}
 
Now, we continue the running example to show how the natural decomposition is found and the subsequent transition tuple from \eqref{eq:trans_solver} is natural.
\begin{example}
\label{ex:COB_matrix}
Returning to our running example introduced in Example \ref{ex:fst_sup_control_example_w_attack}, the transformation, $B$, used to enforce determinism on $(P, S)$ is obtained with the rows of $P$ as 
\begin{equation*}
B = 
    \begin{bmatrix}
         -1.29 & 1.15 \\ 
         -1.29 & -0.58 
    \end{bmatrix}.
\end{equation*}
This transformation provides the following new decomposition;
\begin{equation*}
 P_{new} = PB^{-1}
    \begin{bmatrix}
 1 & 0 \\ 
 0 & 1 \\
 0 & 1
    \end{bmatrix}
, 
 S_{new} = BS = 
\begin{bmatrix}
1 & 1 & 1  & 0 \\ 
1 & 0 & 0 & 1
\end{bmatrix}.
\end{equation*}


Finally, we can apply \eqref{eq:trans_solver} to determine the transition tuple for every $\chi \in \mathbf{I} \times \mathbf{O} $. The transition tuple defining the actuator attacker is found to be 
\begin{equation*}
 T_{\chi_1 } = T_{\chi_2 } = P_{new}^L H_{\chi_{1,2}} S_{new}^R 
 =
    \begin{bmatrix}
 0 & 1 \\ 
 0 & 0 
    \end{bmatrix},
\end{equation*}
\begin{equation*}
T_{\chi_3 } = P_{new}^L H_{\chi_{3}} S_{new}^R 
=
    \begin{bmatrix}
 0 & 0 \\ 
 1 & 0 
    \end{bmatrix}
    ,
\end{equation*}
\begin{equation*}
    t_0 =   
    \begin{bmatrix}
             1 & 0 
    \end{bmatrix}
     \text{ ,  }   
     t_{\infty} =   
    \begin{bmatrix}
             1 \\ 
             1 
    \end{bmatrix}
\end{equation*}
The attacker FST structure shown in Figure \ref{fig:ex_fst_attack} is identical to the FST obtained from the tuples above with the construction process given in Lemma \ref{lemma:fst_def_equiv}. This would not be possible without the change of basis transformation, $B$. Note that some state labels can be "reshuffled" during this process, but the FST structure will not change.
\end{example}

We finish by combining these findings with the encompassing theorem that presents a technique for deriving attacker FST models from a sufficient set of matrices $H_z$.
\begin{formaltheorem}
    \label{theorem:summary}
Let $\mathcal{A}$ be an FST. Assume $\mathcal{A}$ has been sufficiently sampled to obtain $H_z$ that represents $\mathcal{A}$. then Equations \eqref{eq:trans_solver} transformed by the change of basis matrix in Lemma \ref{lemma:natural_fact} provide an FST $\mathcal{A}'$ that is equivalent to $\mathcal{A}$.
\end{formaltheorem}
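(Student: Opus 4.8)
The plan is to prove Theorem~\ref{theorem:summary} as a bookkeeping composition of the three preceding lemmas, tracking the realized language at each stage. First I would unpack the hypothesis that $H_z$ represents $\mathcal{A}$: by Definition~\ref{def:H_z} this means $H_\Theta$ is a basis for $L(\mathcal{A})$, so in particular $\mathrm{rank}(H_\Theta) = |\mathbf{S}|$, and each $H_{\chi_i}$ carries no false positives or negatives. These are exactly the standing hypotheses demanded by Lemmas~\ref{lemma:factorization2trans_tuple},~\ref{lemma:equiv_binary}, and~\ref{lemma:natural_fact}, so once they are checked the remaining argument is purely a matter of applying these results in the correct order.

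Next I would fix an SVD-based full-rank decomposition $H_\Theta = PS$ and invoke Lemma~\ref{lemma:natural_fact} to build the change-of-basis matrix $B$ from the linearly independent rows of $P$, obtaining the natural decomposition $(P_{new}, S_{new}) = (PB^{-1}, BS)$ of $H_\Theta$. Feeding $(P_{new}, S_{new})$ into \eqref{eq:trans_solver} yields a transition tuple $\mathcal{T}$, and Lemma~\ref{lemma:equiv_binary} guarantees that $\mathcal{T}$ is \emph{natural} precisely because $(P_{new}, S_{new})$ is a natural decomposition. At this point Lemma~\ref{lemma:fst_def_equiv} lets me read off a concrete FST $\mathcal{A}'$ directly from the natural tuple $\mathcal{T}$, with $\mathcal{T}$ realizing $L(\mathcal{A}')$ by construction.

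The one step requiring care --- and the main obstacle --- is arguing that $\mathcal{T}$ realizes $L(\mathcal{A})$ itself rather than some other language, since $\mathcal{T}$ is built from the transformed pair $(P_{new}, S_{new})$ instead of directly from $(P,S)$. I would dispatch this using the change-of-basis invariance established just before Lemma~\ref{lemma:equiv_binary}: replacing $(P,S)$ by $(PB^{-1}, BS)$ conjugates each $T_\chi$ and transforms $t_0, t_\infty$ in the compensating way, so that the scalar $f_{\mathcal{T}}(\chi) = t_0^T T_\chi t_\infty$ is unchanged for every word $\chi \in (\mathbf{I}\times\mathbf{O})^*$. Hence the transformed tuple realizes the same language as the untransformed one, which by Lemma~\ref{lemma:factorization2trans_tuple} is exactly $L(\mathcal{A})$.

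Finally I would close the loop by uniqueness of realization: a transition tuple determines its realized language uniquely as $\{\chi : f_{\mathcal{T}}(\chi) = 1\}$, so from the single tuple $\mathcal{T}$ realizing both $L(\mathcal{A}')$ and $L(\mathcal{A})$ I conclude $L(\mathcal{A}') = L(\mathcal{A})$, i.e.\ $\mathcal{A}'$ is equivalent to $\mathcal{A}$. The proof is therefore short, and the only genuinely non-mechanical point is confirming that the realized language survives the natural change of basis --- everything else is the ordered application of the four lemmas that formalize the informal Theorem~\ref{theorem:informal}.
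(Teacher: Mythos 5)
Your proposal follows essentially the same route as the paper's proof: take the SVD-based full-rank decomposition of $H_\Theta$, apply the change of basis from Lemma~\ref{lemma:natural_fact} to obtain a natural decomposition, conclude via Lemma~\ref{lemma:equiv_binary} that the tuple from \eqref{eq:trans_solver} is natural, and read off $\mathcal{A}'$ with Lemma~\ref{lemma:fst_def_equiv}. If anything, your write-up is more careful than the paper's two-line argument: the paper credits Lemma~\ref{lemma:natural_fact} with showing the resulting tuple realizes $L(\mathcal{A})$, whereas you correctly route that step through Lemma~\ref{lemma:factorization2trans_tuple} combined with the invariance of $f_{\mathcal{T}}$ under the change of basis, which is the one point the paper leaves implicit.
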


\begin{proof}

Let $\mathcal{A}$ and $H_z$ be as above. Let $(P, S)$ be a full rank decomposition of $H_\Theta$ from SVD. Adujst $(P,S)$ with \eqref{eq:b_construct} to obtain a natural decomposition $(P_{new}, S_{new})$. In Lemma \ref{lemma:natural_fact} we showed that the transition tuple $\mathcal{T}$ obtained with \eqref{eq:trans_solver} will be a natural transition tuple that realizes $L(\mathcal{A})$. Finally, Lemma \ref{lemma:fst_def_equiv} proved that the FST model $\mathcal{A}'$ constructed from $\mathcal{T}$ is equivalent to $\mathcal{A}$. 
\end{proof}

Finally, we outline how our spectral learning method can be paired with
Equation~\eqref{eq:sup_candidate}. This shows how a resilient supervisor can be constructed from sufficiently many samples generated by unknown attacker FSTs. The resulting guarantee is formalized in the following Corollary.

\begin{corollary}
\label{theorem:resilient_sup}
Assume unknown unknown actuator and sensor attackers $\mathcal{A}_a$, and $\mathcal{A}_s$ have been sufficiently sampled to curate $H_z^a$ and $H_z^s$ respectively. Then we can obtain $\mathcal{A'}_a$, and $\mathcal{A'}_s$ that are equivalent to $\mathcal{A}_a$,and $\mathcal{A}_s$ respectively with Lemma \ref{lemma:natural_fact} and Equation \eqref{eq:trans_solver}. For a given plant $\mathcal{P}$ and desired prefix-closed regular language $\mathcal{K} \subseteq L(\mathcal{P})$, the FST given by 
\begin{equation}
\label{eq:theorem:resilient_sup-1}
\mathcal{S} = \mathcal{A'}_s^{-1} \circ \mathcal{M}_\mathcal{K}^{-1} \circ \mathcal{A'}_a^{-1}
\end{equation}  
is a resilient supervisor if and only if Equation \eqref{eq:sup_req} holds. A resilient supervisor does not exist. 
\end{corollary}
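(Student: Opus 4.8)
The plan is to reduce this corollary to the already-established Theorem (Formal) (\ref{theorem:summary}) together with the resilience characterization of \cite{wang2023attackresilient} recalled in Section \ref{sec:problem_form}. First I would invoke Theorem \ref{theorem:summary} twice. Since $\mathcal{A}_a$ and $\mathcal{A}_s$ have been sampled sufficiently to produce collections $H_z^a$ and $H_z^s$ that represent them, applying \eqref{eq:trans_solver} together with the change-of-basis matrix of Lemma \ref{lemma:natural_fact} yields natural transition tuples and hence FSTs $\mathcal{A}'_a$ and $\mathcal{A}'_s$ satisfying $L(\mathcal{A}'_a)=L(\mathcal{A}_a)$ and $L(\mathcal{A}'_s)=L(\mathcal{A}_s)$. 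This is exactly the equivalence asserted in the first two sentences of the statement, so the learned models are interchangeable with the true ones at the level of languages.

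The core step is to show that the synthesized supervisor \eqref{eq:theorem:resilient_sup-1} built from the learned models realizes the same language as the ideal supervisor $\widehat{\mathcal{S}} = \mathcal{A}_s^{-1}\circ\mathcal{M}_\mathcal{K}^{-1}\circ\mathcal{A}_a^{-1}$ that \eqref{eq:sup_candidate} would produce from the true models. Here I would use that both the FST inverse, which merely swaps the input and output label on each transition, and FST composition are defined purely in terms of accepted input/output words: replacing an FST by a language-equivalent one leaves the inverse language unchanged and preserves the language of any composition in which it appears. Consequently $L((\mathcal{A}'_a)^{-1})=L(\mathcal{A}_a^{-1})$ and $L((\mathcal{A}'_s)^{-1})=L(\mathcal{A}_s^{-1})$, whence $L(\mathcal{S})=L(\widehat{\mathcal{S}})$. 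Substituting into the supervised-plant identity $L(\mathcal{P}\mid\mathcal{S},\mathcal{A}_s,\mathcal{A}_a)=L((\mathcal{A}_s\circ\mathcal{S}\circ\mathcal{A}_a)^{-1})\cap L(\mathcal{P})$ from \cite{wang2023attackresilient} then shows that the closed-loop behavior produced by $\mathcal{S}$ against the genuine attackers coincides with that produced by $\widehat{\mathcal{S}}$.

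It remains to apply the resilience criterion. By definition $\mathcal{S}$ is resilient precisely when $L(\mathcal{P}\mid\mathcal{S},\mathcal{A}_s,\mathcal{A}_a)=\mathcal{K}$, which is \eqref{eq:sup_req}; since $L(\mathcal{S})=L(\widehat{\mathcal{S}})$, this condition yields the same verdict whether tested on the learned-model supervisor or on the true-model one, giving the ``if and only if'' claim for $\mathcal{S}$. For the non-existence assertion I would invoke the result of \cite{wang2023attackresilient} recalled in Section \ref{sec:problem_form}: because $\mathcal{K}=L(\mathcal{M}_\mathcal{K})$ and the candidate \eqref{eq:sup_candidate} is the canonical supervisor obtained by inverting both attack channels, if this candidate fails \eqref{eq:sup_req} then no FST supervisor can enforce $\mathcal{K}$ against $(\mathcal{A}_s,\mathcal{A}_a)$, so a resilient supervisor does not exist. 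I expect the main obstacle to be the middle paragraph, namely rigorously justifying that the entire synthesis pipeline \eqref{eq:sup_candidate} is invariant under replacing each attacker FST by a language-equivalent learned model; this rests on the fact that inverse and composition respect language equivalence, a property inherited from the regular-language semantics of FSTs. Everything else is a direct substitution into previously proved results.
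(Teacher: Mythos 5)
Your proposal is correct and follows essentially the same route as the paper: the paper's proof simply combines Theorem \ref{theorem:summary} (the learned models are language-equivalent to the true attackers) with Theorem 3 of \cite{wang2023attackresilient} (the canonical inverted-attacker supervisor is resilient iff \eqref{eq:sup_req} holds). Your middle paragraph merely makes explicit the step the paper leaves implicit — that FST inversion and composition respect language equivalence, so substituting $\mathcal{A}'_a,\mathcal{A}'_s$ for $\mathcal{A}_a,\mathcal{A}_s$ in \eqref{eq:sup_candidate} does not change the closed-loop language — which is a worthwhile addition rather than a divergence.
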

\begin{proof}
    The proof follows immediately from combining Theorem \ref{theorem:summary} and Theorem 3 from \cite{wang2023attackresilient}. Theorem \ref{theorem:summary} provides equivalent FST models for the unknown attackers and Theorem 3 from \cite{wang2023attackresilient} shows how these attackers can be used to create a resilient supervisor. 
\end{proof}

\vspace{-12pt}
\section{Conclusion}
\label{sec:conclusion}
In conclusion, our study addresses the critical issue of cyber attacks on cyber-physical systems (CPS) by leveraging supervisory control methods enhanced with finite-state transducers (FSTs) and spectral analysis. We developed a novel method capable of modeling and neutralizing broad-spectrum, unknown attackers by analyzing the historical behavior of attacks within a supervisory control loop. Our findings demonstrate the potential of supervisory control strategies to enhance the security and resilience of CPS, by offering a framework to mitigate sophisticated cyber threats and ensure the integrity of critical infrastructures. 
\vspace{-12pt}

\bibliography{nathan_journal}
\bibliographystyle{IEEEtran}



\end{document}